\title{Query-Efficient Algorithm to Find all Nash Equilibria in a Two-Player Zero-Sum Matrix Game}
\author{Arnab Maiti \and Ross Boczar \and Kevin Jamieson \and Lillian J. Ratliff
\\\and University of Washington\thanks{
\texttt{\{arnabm2,rjboczar,ratliffl\}@uw.edu}, \texttt{jamieson@cs.washington.edu}}
}
\date{}
\begin{document}

\maketitle
\begin{abstract}
We study the query complexity of finding the set of all Nash equilibria $\calX_\star \times \calY_\star$ in two-player zero-sum matrix games. \citet{fearnley2016finding} showed that for any randomized algorithm, there exists an $n \times n$ input matrix where it needs to query $\Omega(n^2)$ entries in expectation to compute a \emph{single} Nash equilibrium. On the other hand, \citet{bienstock1991note} showed that there is a special class of matrices for which one can query $O(n)$ entries and compute its set of all Nash equilibria. However, these results do not fully characterize the query complexity of finding the set of all Nash equilibria in two-player zero-sum matrix games.

In this work, we characterize the query complexity of finding the set of all Nash equilibria $\calX_\star \times \calY_\star$ in terms of the number of rows $n$ of the input matrix $A \in \mathbb{R}^{n \times n}$, row support size $k_1 := |\bigcup\limits_{x \in \calX_\star} \supp(x)|$, and column support size $k_2 := |\bigcup\limits_{y \in \calY_\star} \supp(y)|$. We design a simple yet non-trivial randomized algorithm that, with probability $1 - \delta$, returns the set of all Nash equilibria $\calX_\star \times \calY_\star$ by querying at most $O(nk^5 \cdot \text{polylog}(n / \delta))$ entries of the input matrix $A \in \mathbb{R}^{n \times n}$, where $k := \max\{k_1, k_2\}$. This upper bound is tight up to a factor of $\text{poly}(k)$, as we show that for any randomized algorithm, there exists an $n \times n$ input matrix with $\min\{k_1, k_2\} = 1$, for which it needs to query $\Omega(nk)$ entries in expectation in order to find the set of all Nash equilibria $\calX_\star \times \calY_\star$.

\end{abstract}
\section{Introduction}
\label{sec:introduction}

Von Neumann's celebrated minimax theorem \citep{v1928theorie} initiated the study of two-player zero-sum matrix games by demonstrating the existence of a Nash equilibrium. Subsequently, it was established that two-player zero-sum matrix games are equivalent to linear programming \citep{dantzig1951proof,adler2013equivalence,brooks2021canonical}. With the development of polynomial-time algorithms for linear programming, it became evident that one can compute a Nash equilibrium for a two-player zero-sum matrix game in polynomial time. This implication has motivated key works at the intersection of game theory and theoretical computer science \citep{chen2009settling,daskalakis2007progress,daskalakis2009note,deligkas2022polynomial,babichenko2020informational}.

One of these works was the development of query-efficient algorithms for finding equilibrium solutions. In the context of two-player zero-sum games, a surprising result was demonstrated by \citet{bienstock1991note}. They presented a deterministic algorithm that finds the strict saddle point (if it exists) by querying $O(n)$ entries of the input matrix $A \in \mathbb{R}^{n\times n}$. A strict saddle point is an entry of a matrix that is uniquely the smallest in its row and uniquely the largest in its column. In game theory, such an entry is referred to as a strict Pure Strategy Nash equilibrium. In contrast, \citet{fearnley2016finding} showed that for any randomized algorithm, there exists an $n \times n$ input matrix $A \in \mathbb{R}^{n\times n}$ from which the algorithm needs to query $\Omega(n^2)$ entries in expectation to compute a single Nash equilibrium. This lower bound arises when the Nash equilibrium is a mixed strategy with all rows and columns in the support of this mixed strategy. The results of \citet{bienstock1991note} and \citet{fearnley2016finding} represent two extremes on the spectrum of solution sizes, highlighting a significant gap in the existing literature. Recently, \citet{auger2014sparse} attempted to address this gap by studying a class of matrices with unique Nash equilibrium and binary entries, characterizing the query complexity in terms of the support size of the Nash equilibrium strategy. However, such a characterization for arbitrary matrices in $\mathbb{R}^{n\times n}$ that may have multiple Nash equilibria remains unaddressed in prior work. This leads us to the following question:

\begin{quote}
\emph{Can we characterize the query complexity of finding the set of all Nash equilibria of a matrix $A\in\mathbb{R}^{n\times n}$ in terms of an appropriate notion of the Nash equilibria solution size?}
\end{quote}

% These algorithms have inspired works on the query complexity of computing various equilibrium concepts in various game settings \citep{babichenko2020informational}.
% It was shown by \citet{grigoriadis1995sublinear} that algorithms for two-player zero-sum matrix games must be randomized in order to be query-efficient, as (in the worst case) any deterministic algorithm needs to query $\Omega(n^2)$ entries from an input matrix $A\in [-1,1]^{n\times n}$ in order to compute an $\varepsilon$-approximate Nash equilibrium when $\varepsilon<\frac{1}{2}$.

% However, current query-efficient algorithms have a major limitation:  they do not compute an exact Nash equilibrium in a finite number of computational time steps, they only do it in the limit.
% To sidestep this limitation, these algorithms can be modified to identify the support of the exact Nash equilibrium with $\bigO(\frac{n\log n}{\varepsilon^2})$ queries, given that $\varepsilon$ is smaller than some matrix-dependent gap parameter (e.g. the minimum gap between two elements in a row).
% However, these gap parameters can be arbitrarily small; as a result, the query complexity of finding an exact Nash equilibrium can be as large as $\Omega(n^2)$.
% This limitation lead us to the following question:
% \begin{quote}
% \emph{Are there non-trivial conditions under which an algorithm can query $o(n^2)$ entries from the input matrix $A\in\mathbb{R}^{n\times n}$ and still return its exact Nash equilibrium with high probability?}
% \end{quote}

In this work, we address the above question.
To this end, let us briefly review two-player zero-sum games on an input matrix $A\in \bbR^{n\times n}$.
Let $\simplex_n$ denote the $(n-1)$-dimensional probability simplex.
A pair $(\xstar,\ystar)\in \simplex_n\times \simplex_n$ is a Nash equilibrium of an input matrix $A\in \bbR^{n\times n}$ if and only if the following holds for all $(x,y)\in \simplex_n\times \simplex_n$:
\begin{equation}\label{eq:nash_def}
    \langle x,A\ystar\rangle \leq \langle \xstar,A\ystar\rangle \leq \langle \xstar,Ay\rangle.
\end{equation}
Von Neumann's minimax theorem states that $\max_{x \in \simplex_n} \min_{y \in \simplex_n} \langle x, A y \rangle = \min_{y \in \simplex_n} \max_{x \in \simplex_n} \langle x, A y \rangle$ (denoted as the value of the game $V_A^\star$).
 Thus, $(\xstar,\ystar)$ is a Nash equilibrium of the matrix $A$ if and only if \[\xstar\in \calX_\star:=\arg\max_{x\in \simplex_n}\min_{y\in \simplex_n}\langle x,Ay\rangle\quad\text{and}\quad \ystar\in  \calY_\star:=\arg\min_{y\in \simplex_n}\max_{x\in \simplex_n}\langle x,Ay\rangle.\]  An important property of the sets $\calX_\star$ and $\calY_\star$ is that they are convex, closed, and bounded polytopes (see \cite{bohnenblust1948mathematical}). In any zero-sum game, there is either exactly one Nash equilibrium $(\xstar, \ystar)$ or infinitely many. However, the value $\langle \xstar, A\ystar \rangle$ is unique and equal to $V_A^\star$. We refer the reader to \Cref{appendix:nash-property} for additional properties of Nash equilibrium, as well as for details on how to efficiently represent the sets $\calX_\star$ and $\calY_\star$.

Our work diverges from prior work as we focus on finding the set of \emph{all} Nash equilibria $\calX_\star\times\calY_\star$. This allows us to perform a full strategic analysis of a game, whereas finding only a \emph{single} equilibrium provides merely a partial view. For example, finding the set of all Nash equilibria allows us to identify all non-degenerate parts of the game that contain equilibrium solutions. Such a non-degenerate part is typically a square submatrix with a unique Nash equilibrium. Additionally, finding the set of all Nash equilibria helps us determine which rows (resp. columns) are suboptimal, in the sense that these rows (resp. columns) are not best response strategies to some Nash equilibrium strategy of the column (resp. row) player.
 
 In the rest of this section, we outline the appropriate notion of Nash equilibria solution size, detail our contributions and techniques, discuss related works, and introduce important notations.

\subsection{Appropriate notion of the Nash equilibria solution size}
Recall that $V_A^\star := \max_{x \in \simplex_n} \min_{y \in \simplex_n} \langle x, A y \rangle = \min_{y \in \simplex_n} \max_{x \in \simplex_n} \langle x, A y \rangle$, $\calX_\star := \arg\max_{x \in \simplex_n} \min_{y \in \simplex_n} \langle x, A y \rangle$, and $\calY_\star := \arg\min_{y \in \simplex_n} \max_{x \in \simplex_n} \langle x, A y \rangle$. This work aims to characterize the query complexity of finding the set of all Nash equilibria $\calX_\star \times \calY_\star$ in terms of the dimension $n$ of the input matrix $A \in \mathbb{R}^{n \times n}$ and the support sizes $k_1 := |\bigcup\limits_{x \in \calX_\star} \supp(x)|$ and $k_2 := |\bigcup\limits_{y \in \calY_\star} \supp(y)|$. To motivate why these are appropriate notions of Nash equilibria solution size, we first present the following important result by \citet{bohnenblust1950solutions}.
\begin{lemma}[\citet{bohnenblust1950solutions}]\label{technical-multiple-NE}
    Consider an input matrix $A\in\mathbb{R}^{n\times n}$. Let $I_x=\{j\in[n]: V_A^\star=\langle x, A e_j\rangle\}$ and $I_y=\{i\in[n]: V_A^\star=\langle e_i, A y\rangle\}$, where $e_i$ denotes the vector with a $1$ in the $i$-th coordinate and $0$'s elsewhere. For any $v\in\mathbb{R}^n$, let $\supp(v):=\{i\in[n]\mid v_i\neq 0\}$. Then we have the following:
    \begin{equation*}
        \bigcup\limits_{x\in \calX_\star}\supp(x)=\bigcap\limits_{y\in \calY_\star}I_y \quad \text{and} \quad \bigcup\limits_{y\in \calY_\star}\supp(y)=\bigcap\limits_{x\in \calX_\star}I_x
    \end{equation*}
    % \begin{itemize}
    %     \item $\bigcup\limits_{x\in \calX_\star}\supp(x)=\bigcap\limits_{y\in \calY_\star}I_y$
    %     \item $\bigcup\limits_{y\in \calY_\star}\supp(y)=\bigcap\limits_{x\in \calX_\star}I_x$
    % \end{itemize}
\end{lemma}

The above lemma suggests that rows not in $\bigcup\limits_{x \in \calX_\star} \supp(x)$ are strictly worse than the rows in $\bigcup\limits_{x \in \calX_\star} \supp(x)$. This is because for all $y^* \in \calY^\star$ and $i \in \bigcup\limits_{x \in \calX_\star} \supp(x)$, we have $V_A^\star = \langle e_i, A y^* \rangle$. In contrast, there exists a $\hat{y} \in \calY^\star$ such that for all $i \notin \bigcup\limits_{x \in \calX_\star} \supp(x)$, $V_A^\star > \langle e_i, A \hat{y} \rangle$. Hence, the number of optimal rows $|\bigcup\limits_{x \in \calX_\star} \supp(x)|$ is an appropriate measure of the solution size for the row player. Using similar logic, we can justify that the number of optimal columns $|\bigcup\limits_{y \in \calY_\star} \supp(y)|$ is an appropriate measure of the solution size for the column player.

Moreover, it is easy to observe that changing the value of the entries $(i,j)$ where $i \notin \bigcup\limits_{x \in \calX_\star} \supp(x)$ and $j \notin \bigcup\limits_{y \in \calY_\star} \supp(y)$ does not affect the set of all Nash equilibria $\calX_\star \times \calY_\star$. This implies that only $O(n \cdot \max\{k_1, k_2\})$ elements are important in determining the set of all Nash equilibria $\calX_\star \times \calY_\star$, which further highlights why $k_1$ and $k_2$ are appropriate notions of the solution size.

% This work aims to characterize the query complexity of finding the set of all Nash equilibria $\calX_\star\times\calY_\star$ in terms of the dimension $n$ of $A$ and the support-sizes $|\bigcup\limits_{x\in\calX_\star}\supp(x)|$ and $|\bigcup\limits_{y\in\calY_\star}\supp(y)|$. \citet{bohnenblust1950solutions} studied additional properties of matrices with unique Nash equilibrium that we will use in our analysis (cf.~Lemma \ref{Nash-Shapley-lemma} and ~Lemma \ref{technical-multiple-NE}) to establish meaningful query complexity upper bounds in Sections \ref{sec:upper} and \ref{sec:non-unique}. 

\subsection{Our contributions and innovative techniques}
Let $k:=\max\{k_1,k_2\}$, where $k_1 := |\bigcup\limits_{x \in \calX_\star} \supp(x)|$ and $k_2 := |\bigcup\limits_{y \in \calY_\star} \supp(y)|$. Our main technical contribution in this paper is the following upper bound result.
\begin{restatable}{theorem}{upper}\label{thm:algo-result}
    There exists a randomized algorithm that, with probability at least $1-\delta$, queries $O(nk^5\cdot \log n\cdot \log(\frac{n}{\delta}))$ entries of the input matrix $A\in\mathbb{R}^{n\times n}$ and returns the set of all Nash equilibria $\calX_\star\times\calY_\star$.
\end{restatable}

We outline the innovative techniques used to achieve the result above. As discussed in the previous section, the results of \citet{bohnenblust1950solutions} suggest a separation between the sub-matrix with row indices $\bigcup\limits_{x \in \calX_\star} \supp(x)$ and column indices $\bigcup\limits_{y \in \calY_\star} \supp(y)$ (which we will refer to as the optimal sub-matrix) and the rest of the matrix. Our goal is to formalize this separation and therefore leverage it to identify the set of all Nash equilibria while minimizing the number of queries required. In the specific case of strict Pure Strategy Nash Equilibrium (strict PSNE), this separation is inherently present, as a strict PSNE is uniquely the least in its row and uniquely the highest in its column. Therefore, one possible approach involves constructing a large matrix with ${n\choose k_1}$ rows and ${n\choose k_2}$ columns, where each entry is bijectively mapped to a $k_1\times k_2$ sub-matrix of $A$. The values of these entries are determined by their corresponding sub-matrices in $A$, and this large matrix possesses a strict PSNE that maps to the optimal sub-matrix of $A$.

Assuming we can construct such a matrix, it remains unclear how to query the entries of this large matrix without significantly increasing the query complexity of our algorithm. The algorithm proposed by \citet{bienstock1991note} is inadequate for this purpose because it queries entries diagonally. If we applied a similar diagonal querying strategy to our large matrix, we would end up querying all entries of the original matrix $A$. Instead, we propose a randomized algorithm for finding strict PSNE that operates differently from the one described by \citet{bienstock1991note}. Specifically, if we have access to query oracles where a single call returns either the entire row or the entire column, our algorithm can identify the strict PSNE by making $O(\log^2(r))$ oracle calls, where $r$ is the total number of rows and columns in the input matrix with a strict PSNE. We can also design a query oracle for the larger matrix that returns a row or column by querying at most $O(nk)$ entries of $A$. Hence, the query complexity of finding the strict PSNE of the large matrix will be nearly $n\cdot \text{poly}(k)$.

One approach to creating such a large matrix is to compute the value of the game on the sub-matrix mapped to each entry and use that as the entry’s value. While this method ensures that the entry in the resulting matrix that maps to the optimal sub-matrix is a Pure Strategy Nash Equilibrium (PSNE)—an entry that is the smallest in its row and the largest in its column—it may not necessarily be a strict PSNE. However, if we are given a set of values $\calV$ that includes the value of the game $V_A^\star$, we can modify the previously constructed matrix to ensure that the PSNE becomes a strict PSNE. Our randomized algorithm for finding the strict PSNE has the additional advantage that, whenever the matrix contains a PSNE, it will, with high probability, compute a set that includes the value of the game $V_A^\star$.

Now, we describe how to make the PSNE that maps to the optimal sub-matrix (which we will refer to as the optimal PSNE) a strict PSNE. Consider an entry that is in the column of the optimal PSNE and has a value equal to $V_A^\star$. If this entry is not the optimal PSNE, using \Cref{technical-multiple-NE}, we can show that the sub-matrix it maps to lacks a Nash equilibrium with all the rows in its support, although it does have a Nash equilibrium with all the columns in its support. Therefore, if we devise a method to decrease the values of all such entries by a very small amount $\varepsilon$, the optimal PSNE will become strictly greater than all the entries in its column.

Although we do not know which column contains the optimal PSNE, we can decrease the values of entries by a very small amount $\varepsilon$ if the entry value is in the set $\calV$, which includes $V_A^\star$, and the sub-matrix it maps to lacks a Nash equilibrium with all the rows in its support, but does have a Nash equilibrium with all the columns in its support. The values of entries corresponding to the optimal sub-matrix remain unchanged, as there exists a Nash equilibrium strategy with all rows and columns in its support. By modifying the entries in this way, we can ensure that the optimal PSNE is strictly larger than all other entries in its column. Similarly, we can make this PSNE strictly smaller than all other entries in its row. Note that while we modify other entries whose value is not equal to $V_A^\star$, we ensure that the relative ordering with respect to $V_A^\star$ remains unchanged. Consequently, we implicitly construct a large matrix with a strict PSNE that maps to the optimal sub-matrix, and we can find this strict PSNE by querying nearly $n \cdot \text{poly}(k)$ entries of $A$.

In summary, our main innovative technique is to reduce the problem of finding the set of all Nash equilibria to finding a strict PSNE in a large matrix, and then to locate this strict PSNE in a query-efficient manner.

We also provide a simple lower bound to demonstrate that $k = \max\{k_1, k_2\}$ accurately captures the query complexity of the problem.
\begin{restatable}{theorem}{lowerr}\label{thm:lower-multiple-ne}
    Let $\calA$ be any randomized algorithm that correctly computes the set of all Nash equilibria $\calX_\star\times \calY_\star$.
Let $\tau(M)$ denote the number of entries queried from the matrix $M$ by $\calA$.
Then, there exists an input matrix $A\in \bbR^{n\times n}$ with $\max\{k_1,k_2\}=k$ and $\min\{k_1,k_2\}=1$ such that $\bbE[\tau(A)]\geq \frac{(n-1)(k+1)}{2}$, where the randomness in $\tau(A)$ is due to $\calA$ only.
\end{restatable}

This result is not surprising, as changing the elements in the rows $\bigcup\limits_{x \in \calX_\star} \supp(x)$ and columns $\bigcup\limits_{y \in \calY_\star} \supp(y)$ can alter the set of all Nash equilibria, making it necessary to observe them.

\subsection{Related work}\label{sec:relatedwork}
Unlike the computational complexity literature, which focuses on minimizing the number of computational steps for algorithms that compute equilibrium while knowing the payoffs, the works on query complexity consider a scenario where an algorithm aims to compute an equilibrium for a game without knowing the payoffs beforehand. The main focus of such works is to minimize the number of payoff queries used. We discuss some of these works here.

\textbf{Query complexity of zero-sum games:} Recently, there has been renewed interest in the topic of query complexity of zero-sum games. Under the assumption that input matrix $A\in\{0,1\}^{n\times n}$ has a unique Nash equilibrium, \citet{auger2014sparse} designed a randomized algorithm for finding the unique Nash equilibrium of $ A $ by querying $ O(n\log(n)\cdot k^{4k}) $ entries of $A$, where $ k = |\supp(x^\star)| $. Recall that for the special case when the input matrix has a strict PSNE, \citet{bienstock1991note} presented a deterministic algorithm that finds the strict PSNE by querying $ O(n) $ entries of the input matrix $ A \in \mathbb{R}^{n \times n} $. Recently, \citet{dallant2024finding} improved this result by designing a deterministic algorithm that finds the strict PSNE in $ O(n\log^*n) $ time by querying $ O(n) $ entries of the input matrix $ A \in \mathbb{R}^{n \times n} $. Parallel to our work, under the same special case, \citet{dallant2024optimal} and \citet{maiti2024near} designed randomized algorithms aimed at minimizing the time complexity and the statistical complexity (when the feedback on the payoff matrix is noisy) respectively.

% \textbf{Comparison to \citet{auger2014sparse}:}  Our work is closely related to the work by \citet{auger2014sparse}. Under the assumption that the entries of the input matrix $A$ have only binary values in $\{0,1\}$, it was shown by \citet{auger2014sparse} that the upper bound of finding the unique Nash equilibrium $(\xstar,\ystar)$ of $A$ is $O(n\log(n)\cdot k^{4k})$ where $k=|\supp(x^\star)|$. Their algorithm proceeds by first executing a no-regret algorithm like EXP3 for $n\cdot \log n\cdot k^{4k}$ iterations in order to compute a $(1/k^k)$-approximate Nash equilibrium and then by appropriately rounding it, it computes the exact Nash equilibrium. However, this method has a drawback. It works only when the entries of the matrix are either 0 or 1. This restriction ensures that $\max_{i\notin \supp(\xstar)} V_A^\star- \langle e_i,A\ystar\rangle \geq \frac{1}{k^k}$. However, when the entries of the matrix take arbitrary values, this condition no longer holds true.  $\max_{i\notin \supp(\xstar)} V_A^\star- \langle e_i,A\ystar\rangle$ can be made arbitrarily small  and in this case the query complexity of any algorithm based on rounding a no-regret learner increases to $n^2$ (see \Cref{appendix:upper}). Our work overcomes this challenge by designing a comparison based algorithm, and this enables us to remove any assumption on the maximum absolute value of the matrix (which is typically not the case for no-regret learners). 

\textbf{Query complexity of general-sum games:}
The query complexity of finding various equilibrium concepts in games—including general-sum multi-player games—has been extensively studied; see the survey by \citet{babichenko2020informational} for a comprehensive overview. For instance, \citet{babichenko2016query} demonstrated that the query complexity of computing an $\varepsilon$-well-supported Nash equilibrium in an $n$-player binary action game is $2^{\Omega(n)}$. 

\citet{fearnley2013learning} showed that in an $n \times n$ bimatrix game with payoffs in the range [0,1], a $\frac{1}{2}$-approximate Nash equilibrium can be computed using at most $2n - 1$ payoff queries. They complemented this result by showing that there is a constant $\varepsilon < \frac{1}{2}$ such that the query complexity of computing an $\varepsilon$-approximate Nash equilibrium is more than $2n - 1$.

\citet{goldberg2016bounds} proved that for a game with $n$ players and $m$ actions, the query complexity of finding an $\varepsilon$-approximate coarse correlated equilibrium is $O\left(\frac{nm \log m}{\varepsilon^2}\right)$. A randomized algorithm was designed by \citet{fearnley2016finding} that finds a $\left(\frac{3 - \sqrt{5}}{2} + \varepsilon\right)$-approximate Nash equilibrium in $n \times n$ bimatrix games using $O\left(\frac{n \log n}{\varepsilon^2}\right)$ payoff queries. Later, \citet{deligkas2023polynomial} presented an algorithm that finds a $\frac{1}{2} + \varepsilon$-well-supported Nash equilibrium in $n \times n$ general-sum matrix games using $O\left(\frac{n \log n}{\varepsilon^4}\right)$ payoff queries.

\subsection{Notation} 
We refer to the entry in the $i$th row and $j$th column of a matrix $M$ as entry $(i, j)$ of $M$, and we denote its value by $M_{i,j}$.
%Row and column order statistics of a matrix $M$ are denoted by parentheses, e.g. $M_{(k),j}$ is the $k$th smallest value of the $j$th column of $M$. 
%Let $e_i$ denote the $i$th standard basis vector, and let $\ones$ to represent the all-ones vector.
Let $e_i$ denote the standard basis vector with the $i$th entry equal to one and all other entries equal to zero. We use $\ones$ to represent the all-ones vector.

Let $[n]:=\{1,2,\ldots,n\}$, and let $\binom{[n]}{k_0}$ denote the set of all subsets of $[n]$ of size $k_0$.
For any $v\in\mathbb{R}^n$, let $\supp(v):=\{i\in[n]\mid v_i\neq 0\}$ denote the support of $v$. 

For any matrix $M\in\bbR^{n\times m}$, let $V_M^\star:=\max_{x\in \simplex_{n}}\min_{y\in\simplex_{m}} \langle x,My\rangle$ denote the value of the zero-sum game on matrix $M$. We say an entry  $(\istar,\jstar)$ is a Pure Strategy Nash equilibrium (in short PSNE) of $M$ if $M_{i,\jstar}\leq M_{\istar,\jstar}\leq M_{\istar,j}$ for all $i\neq \istar,j\neq \jstar$. We say an entry $(\istar,\jstar)$ is a strict PSNE of $M$ if $M_{i,\jstar}< M_{\istar,\jstar}< M_{\istar,j}$ for all $i\neq \istar,j\neq \jstar$.

Let \(\mathbb{R}^{\mathcal{I}}\) represent the space of all real-valued vectors indexed by the set of indices \(\mathcal{I}\). The simplex over a set of indices \(\mathcal{I}\), denoted \(\simplex_{\mathcal{I}}\), is defined as the set of vectors \(x \in \mathbb{R}^{\mathcal{I}}\) such that \(\sum_{i \in \mathcal{I}} x_i = 1\) and \(x_i \geq 0\) for all \(i \in \mathcal{I}\). 

For simplicity, we use $\log x$ to denote $\log_2 x$. Also for simplicity, unless otherwise noted, by \textit{number of entries queried}, we mean the number of distinct entries queried.

% and $\log^k(x)$ to denote $(\log x)^k$.

\paragraph{Outline of the paper:} In \Cref{appendix:nash-property}, we discuss key properties of Nash equilibria. In \Cref{sec:technical}, we present important technical lemmas that will be utilized throughout the paper. \Cref{sec:upper} introduces the randomized algorithm for finding strict PSNE, while \Cref{sec:non-unique} details the algorithm for finding the set of all Nash equilibria and provides a lower bound on query complexity.

\section{Properties of Nash Equilibrium }\label{appendix:nash-property}
 We now state some properties of Nash equilibrium (see \cite{karlin2017game} and \cite{bohnenblust1948mathematical} for more details). Consider $A\in\mathbb{R}^{n\times n}$. 
 The pair $(\xstar,\ystar)$ is a Nash equilibrium of $A$ if and only if the following hold simultaneously:
 \begin{itemize}
    %\item The value of the game on $A$ satisfies $\langle \xstar,A\ystar\rangle=V_A^\star$.
    \item The value of the game on $A$ satisfies $\langle e_i, A\ystar\rangle =V^\star_A$ for any $i\in \supp(\xstar)$ and  $\langle \xstar, Ae_j\rangle=V^\star_A$ for any $j\in \supp(\ystar)$.
    \item  The value of the game on $A$ satisfies  $\langle e_i, A\ystar\rangle \leq V^\star_A$ for any $i\notin \supp(\xstar)$ and $\langle \xstar, Ae_j\rangle\geq V^\star_A$ for any $j\notin \supp(\ystar)$.
 \end{itemize}
 It is also easy to observe that $(\xstar,\ystar)$ is a Nash equilibrium of $A$ if and only if $\langle \xstar,A\ystar\rangle=\max_{i\in[n]}\langle e_i,A\ystar\rangle=\min_{j\in[n]}\langle \xstar,Ae_j\rangle$.
 Now, recall that due to Von Neumann's minimax theorem, we have \[V_A^\star:=\max_{x\in \simplex_n}\min_{y\in \simplex_n}\langle x,Ay\rangle=\min_{y\in \simplex_n}\max_{x\in \simplex_n}\langle x,Ay\rangle.\] Observe that this is also equivalent to \[V_A^\star=\max_{x\in \simplex_n}\min_{j\in [n]}\langle x,Ae_j\rangle=\min_{y\in \simplex_n}\max_{i\in[n]}\langle e_i,Ay\rangle.\] Also recall that $(\xstar,\ystar)$ is a Nash equilibrium of the matrix $A$ if and only if \[\xstar\in \calX_\star:= \arg\max_{x\in \simplex_n}\min_{y\in \simplex_n}\langle x,Ay\rangle\quad \text{and} \quad\ystar\in \calY_\star:=\arg\min_{y\in \simplex_n}\max_{x\in \simplex_n}\langle x,Ay\rangle.\] 

This implies that if there are strategies $\xstar,\ystar\in\simplex_n$ such that $\min_{y\in \simplex_n}\langle \xstar,Ay\rangle=V_A^\star$ and $\max_{x\in \simplex_n}\langle x,A\ystar\rangle=V_A^\star$, then $\xstar\in\calX_\star$ and $\ystar\in\calY_\star$. Also, convex combination of multiple Nash equilibria is a Nash equilibrium as $\calX_\star,\calY_\star$ are compact convex sets.
 
 %This implies that if the matrix $A$ has two Nash equilibrium $(x_1,y_1)$ and $(x_2,y_2)$ then $(x_1,y_2)$ and $(x_2,y_1)$ are also the Nash equilibrium of $A$. Moreover, $(\lambda x_1+(1-\lambda)x_2,\lambda y_1+(1-\lambda)y_2)$ is also a Nash equilibrium of $A$ for all $\lambda\in[0,1]$ as $\calX_\star,\calY_\star$ are compact convex sets.

The sets $\calX_\star$ and $\calY_\star$ can be efficiently represented as the convex hulls of their extreme points, which are finite and non-zero in number. Let $\calE(\calX)$ denote the set of extreme points of a set $\calX$. For any submatrix $\tilde{A}$ of $A$, define $\tilde{x}$ and $\tilde{y}$ as the vectors formed by removing the components of $x$ and $y$ that correspond to the rows and columns missing from $\tilde{A}$. The following lemma describes an important property of the extreme points of $\calX_\star$ and $\calY_\star$.
\begin{lemma}[\citet{bohnenblust1948mathematical}]\label{basic-solution}
   A Nash equilibrium $(x_\star,y_\star) $ of $A$ is in the set $ \calE(\calX_\star)\times \calE(\calY_\star)$ if and only if there is a square sub-matrix $\tilde{A}$ of $A$ such that 
    \begin{equation*}
        \tilde{x}^\top_\star=\frac{\mathbf{1}^\top adj(\tilde{A})}{\mathbf{1}^\top adj(\tilde{A})\mathbf{1}},\; \tilde{y}_\star=\frac{adj(\tilde{A})\mathbf{1}}{\mathbf{1}^\top adj(\tilde{A})\mathbf{1}},\; V_A^\star=\frac{det(\tilde{A})}{\mathbf{1}^\top adj(\tilde{A})\mathbf{1}}
    \end{equation*}
\end{lemma}

It is easy to observe that we can compute all the extreme points of $\calX_\star$ and $\calY_\star$ by examining all the square submatrices $\tilde{A}$ of the input matrix $A$ and applying the above lemma. Let $I_x = \{j \in [n] : V_A^\star = \langle x, A e_j \rangle\}$ and $I_y = \{i \in [n] : V_A^\star = \langle e_i, A y \rangle\}$. Since the sets $\calX_\star$ and $\calY_\star$ are the convex hulls of $\calE(\calX_\star)$ and $\calE(\calY_\star)$, we have $\bigcup\limits_{x_\star \in \calX_\star} \supp(\xstar) = \bigcup\limits_{x_\star \in \calE(\calX_\star)} \supp(\xstar)$, $\bigcup\limits_{y_\star \in \calY_\star} \supp(\ystar) = \bigcup\limits_{y_\star \in \calE(\calY_\star)} \supp(\ystar)$, $\bigcap\limits_{x_\star \in \calX_\star} I_x = \bigcap\limits_{x_\star \in \calE(\calX_\star)} I_x$, and $\bigcap\limits_{y_\star \in \calY_\star} I_y = \bigcap\limits_{y_\star \in \calE(\calY_\star)} I_y$. This fact allows us to verify conditions involving $I_x$, $I_y$, and the union of supports, which is critical to our algorithm in later sections.
\section{Technical Lemmas for Nash Equilibrium}\label{sec:technical}

Recall that for the input matrix $A \in \mathbb{R}^{n \times n}$, $\calX_\star := \arg\max_{x \in \simplex_n} \min_{y \in \simplex_n} \langle x, Ay \rangle$ and $\calY_\star := \arg\min_{y \in \simplex_n} \max_{x \in \simplex_n} \langle x, Ay \rangle$. Also recall that $\simplex_{\calI}$ represents the simplex over a set of indices $\calI$. For any vector $v$ and the set of indices $\calS = \{i_1, \ldots, i_{\ell}\}$, let $v_\calS$ denote the vector $(v_{i_1}, \ldots, v_{i_{\ell}})$. We now state an important lemma that provides the necessary and sufficient condition for a subset of rows and columns to be optimal.
\begin{lemma}\label{technical:correctness}
    Let $\calI$ and $\calJ$ be two subsets of $[n]$. Let $M_1$ be the submatrix of $A$ with row indices $\calI$ and column indices $[n]$, and $M_2$ be the submatrix of $A$ with row indices $[n]$ and column indices $\calJ$. Let $\calX_{M_1} := \arg\max_{x \in \simplex_{\calI}} \min_{y \in \simplex_{n}} \langle x, M_1 y \rangle$ and $\calY_{M_2} := \arg\min_{y \in \simplex_{\calJ}} \max_{x \in \simplex_{n}} \langle x, M_2 y \rangle$. Let $\calX_A := \{x \in \simplex_n : x_\calI \in \calX_{M_1} \}$ and $\calY_A := \{y \in \simplex_n : y_\calJ \in \calY_{M_2} \}$. The following holds:
\begin{enumerate}
    \item If $\calI=\bigcup\limits_{x\in\calX_\star}\supp(x)$ and $\calJ=\bigcup\limits_{y\in\calY_\star}\supp(y)$, then the following holds:
    \begin{itemize}
        \item $\calX_A=\calX_\star$ and $\calY_A=\calY_\star$
        \item $V_{M_1}^\star=V_A^\star$ and $\bigcap\limits_{x\in\calX_A}I_x=\calJ$ where $I_x=\{j\in[n]: V_A^\star=\langle x, A e_j\rangle\}$. 
        \item $V_{M_2}^\star=V_A^\star$ and $\bigcap\limits_{y\in\calY_A}I_y=\calI$ where $I_y=\{i\in[n]: V_A^\star=\langle e_i, A y\rangle\}$.
    \end{itemize}
    \item Let $I_x'=\{j\in[n]: V_{M_1}^\star=\langle x, A e_j\rangle\}$ and $I_y'=\{i\in[n]: V_{M_2}^\star=\langle e_i, A y\rangle\}$. If $V_{M_1}^\star=V_{M_2}^\star$, $\bigcap\limits_{x\in\calX_A}I_x'=\calJ$ and $\bigcap\limits_{y\in\calY_A}I_y'=\calI$ then  $\calX_A=\calX_\star$ and $\calY_A=\calY_\star$.
\end{enumerate}
\end{lemma}
\begin{proof}
    Let us assume that $\calI=\bigcup\limits_{x\in\calX_\star}\supp(x)$ and $\calJ=\bigcup\limits_{y\in\calY_\star}\supp(y)$. First consider $\hat x\in \calX_A$. Observe that $\supp(\hat x)\subseteq\calI$. Now observe that $V_{M_1}^\star=\min_{y\in\simplex_n}\langle \hat x,Ay\rangle\leq \max_{x\in\simplex_n}\min_{y\in\simplex_n}\langle  x,Ay\rangle=V_A^\star$. Next consider $x^\star\in\calX_\star$. Observe that $\supp(x^\star)\subseteq \calI$. Now observe that $V_A^\star=\min_{y\in\simplex_n}\langle x^\star_\calI,M_1y\rangle\leq \max_{x\in\simplex_{\calI}}\min_{y\in\simplex_n}\langle  x,M_1y\rangle=V_{M_1}^\star$. Hence, we have $V_A^\star=V_{M_1}^\star$ and therefore $\hat x\in\calX_\star$ and $x^\star\in\calX_A$ implying that $\calX_A=\calX_\star$. Analogously, one can show that $V_{M_2}^\star=V_A^\star$ and $\calY_A=\calY_\star$. Due to \Cref{technical-multiple-NE}, we get that $\bigcap\limits_{x\in\calX_A}I_x=\bigcap\limits_{x\in\calX_\star}I_x=\bigcup\limits_{y\in\calY_\star}\supp(y)=\calJ$ and $\bigcap\limits_{y\in\calY_A}I_y=\bigcap\limits_{y\in\calY_\star}I_y=\bigcup\limits_{x\in\calX_\star}\supp(x)=\calI$.

    Next let us assume that $V_{M_1}^\star=V_{M_2}^\star$, $\bigcap\limits_{x\in\calX_A}I_x'=\calJ$ and $\bigcap\limits_{y\in\calY_A}I_y'=\calI$. Consider $\hat x\in \calX_A$ and $\hat y\in\calY_A$. Observe that $\supp(\hat x)\subseteq\calI$ and $\supp(\hat y)\subseteq\calJ$. Now observe that $\langle \hat x,A\hat y\rangle\geq \min_{y\in\simplex_n}\langle \hat x,Ay\rangle=V_{M_1}^\star$. Next observe that $\langle \hat x,A\hat y\rangle\leq \max_{x\in\simplex_n}\langle x,A\hat y\rangle=V_{M_2}^\star$. Since $V_{M_1}^\star=V_{M_2}^\star$, we have $\langle \hat x,A\hat y\rangle=V_{M_1}^\star=V_{M_2}^\star$. This implies that $\langle \hat x,A\hat y\rangle=V_{M_2}^\star= \max_{x\in \simplex_n}\langle x,A\hat y\rangle$ and $\langle \hat x,A\hat y\rangle=V_{M_1}^\star= \min_{y\in \simplex_n}\langle \hat x,A y\rangle$. Hence $(\hat x,\hat y)$ is a Nash equilibrium of $A$ and $V_A^\star=\langle \hat x,A\hat y\rangle=V_{M_1}^\star=V_{M_2}^\star$. This implies that $\calX_A\subseteq \calX_\star$ and $\calY_A\subseteq \calY_\star$. Consider $x^\star\in \calX_\star$. Due to \Cref{technical-multiple-NE}, we have $\supp(x^\star)\subseteq\bigcap\limits_{y\in\calY_\star}I_y\subseteq\bigcap\limits_{y\in\calY_A}I_y'=\calI$. Now we have $\min_{y\in\simplex_n}\langle x^\star_{\calI},M_1y\rangle=\min_{y\in\simplex_n}\langle x^\star,Ay\rangle=V_A^\star=V_{M_1}^\star$. Hence, $x^\star_{\calI}\in \calX_{M_1}$ which implies $\calX_\star\subseteq \calX_A$. Analogously, one can show that $\calY_{\star}\subseteq \calY_A$. Hence, we have $\calX_A=\calX_\star$ and $\calY_A=\calY_\star$.
\end{proof}

The following lemma states that a matrix can have at most one strict PSNE.
\begin{lemma}\label{Unique-PSNE}
Any matrix $M\in\bbR^{n\times m}$ can have at most one strict PSNE.
\end{lemma}
\begin{proof}
    Consider a matrix $M$ with a strict PSNE at $(i_\star,j_\star)$. Since $(i_\star,j_\star)$ is a strict PSNE, we have $M_{i,j_\star} < M_{i_\star,j}$ for all $(i,j)\neq (\istar,\jstar)$. Suppose there is another entry $(i',j') \neq (i_\star,j_\star)$ that is also a strict PSNE. Since $(i',j')$ is a PSNE, we would have $M_{i_\star,j'} \leq M_{i',j'} \leq M_{i',j_\star}$. However, this contradicts the inequality $M_{i',j_\star} < M_{i_\star,j'}$. Therefore, the matrix $M$ cannot have more than one strict PSNE.
    
    % First observe that $(e_{\istar},e_{\jstar})$ is a Nash equilibrium of $M$. Next, let $(x,y)$ be a Nash equilibrium of $M$. Now observe that $(x,e_{\jstar})$ and $(e_{\istar},y)$ are also the Nash equilibrium of $M$. This implies that $\supp(x)=\{\istar\}$ and $\supp(y)=\{\jstar\}$. Hence $(e_{\istar},e_{\jstar})$ is the unique Nash equilibrium.
\end{proof}
%\ljr{This proof is trivial and I find it odd that we push the proof of the above lemma to the appendix and then keep this one. I suggest just putting all the proofs in the appendix. Or take this proof out of a proof environment and just say this follows from the fact that for any two Nash equilibrium $(x,y)$ and $(x',y')$ that $(x',y)$ and $(x,y')$ are also Nash.}
%\arnab{Ok will do that.}

%\textcolor{red}{Change the lemma statement.}
%\input{lower_bound.tex}
%\input{randomized_psne_alg.tex}
%\input{randomized_ne_alg.tex}
%\input{swordfish_new.tex}
\section{An important subroutine for PSNE}\label{sec:upper}
%\textcolor{red}{As we will show in the next section, the problem of finding the Nash equilibrium can be reduced to finding PSNE in some other matrix.}

A key part of our algorithm for finding the set of all Nash equilibria is a query-efficient subroutine for a matrix $M\in\mathbb{R}^{n\times m}$ that has a PSNE and potentially a strict PSNE. Moreover, we assume that we have access to the matrix $M$ through a query oracle $\calQ$ which operates as follows:
\begin{itemize}
    \item A single call to query oracle $\calQ$ with input $(\text{row},i)$ returns the entire $i$-th row of $M$.
    \item A single call to query oracle $\calQ$ with input $(\text{column},j)$ returns the entire $j$-th column of $M$.
\end{itemize}

\noindent
In this section, we work with a matrix $M\in\mathbb{R}^{n\times m}$ and we aim to achieve the following goals by making $O(\log^2(n+m))$ calls to the query oracle $\calQ$:
\begin{itemize}
    \item If the matrix $M$ has a PSNE, we construct a set of distinct real numbers that contains $V_M^\star$ with high probability.
    \item If the matrix $M$ has a strict PSNE, we find the strict PSNE with high probability.
\end{itemize}

\subsection{A Randomized Algorithm to Find the Strict 
PSNE}\label{sec:random:psne}

% \textcolor{red}{Convert this into a sub-routine for a matrix $M\in\mathbb{R}^{n\times m}$. Maybe also mention that with an oracle access of rows and columns only. Also do the analysis for non-strict PSNE case.}

In this section, we assume that the input matrix $M\in\bbR^{n\times m}$ has a strict PSNE $(\istar,\jstar)$. By definition, we have that $M_{i,\jstar}<M_{\istar,\jstar}<M_{\istar,j}$ for all $i\in [n]\setminus \{\istar\}$ and $j\in[m]\setminus \{\jstar\}$.
We use this crucial fact to design a randomized algorithm, \psnealg/  (\Cref{alg:psne}), for finding the strict PSNE.

We first describe our procedure, \psnealg/, intuitively. The procedure progresses through a logarithmic number of stages. In each stage, it aims to eliminate half of the sub-optimal rows and columns. At any given stage where the goal is to eliminate half of the sub-optimal rows, we begin by sampling a logarithmic number of row indices and querying all entries in each sampled row. Next, we examine the highest values among the queried entries in each column and select the column $\widehat{j}$ with the lowest of these values. We then sample all entries in column $\widehat{j}$ and remove rows where the corresponding entries in the column $\widehat{j}$ are below the median value of that column. An analogous process exists to eliminate half of the sub-optimal columns. This process of sampling, querying, and reducing the search space continues until only one row and one column remain, which together form the strict PSNE. For a formal description of the procedure \psnealg/, please refer to \Cref{alg:psne}.

\begin{algorithm2e}[!ht]
\caption{\psnealg/($M,\delta$)}
\label{alg:psne}
\DontPrintSemicolon
$n\gets$ number of rows of $M$, $m\gets$ number of columns of $M$,\:\: $\calX_1 \gets [n], \:\calY_1 \gets [m], \:\calV_1\gets\emptyset$\;
\For{$t=1,2,\ldots$}{
    \If{$|\calX_t|=1$ and $|\calY_t|=1$}{
    $\calV\gets \calV_{t}\cup\{M_{\ihat,\jhat}\}$, where $\calX_t=\{\;\ihat\;\}$ and $\calY_t=\{\;\jhat\;\}$\;
    \KwRet $(\;\ihat,\:\jhat\;)$ and $\calV$
    }
    \If{$|\calX_t|=1$ }{
    Query all the elements of row $\ihat$, where $\calX_t=\{\;\ihat\;\}$\;
    Choose the argument $\jhat\in \argmin_{j\in \calY_t}M_{\ihat,j}$ and set $\calV\gets \calV_t\cup \{M_{\ihat,\jhat}\}$ \;
    \KwRet $(\;\ihat,\:\jhat\;)$ and $\calV$
    }
    \If{$|\calY_t|=1$}{
    Query all the elements of column $\jhat$, where $\calY_t=\{\;\jhat\;\}$\;
    Choose the argument $\ihat\in \argmax_{i\in \calX_t}M_{i,\jhat}$ and set $\calV\gets \calV_t\cup \{M_{\ihat,\jhat}\}$ \;
    \KwRet $(\;\ihat,\:\jhat\;)$ and $\calV$
    }
    $n_{t}\gets |\calX_t|$, $m_t\gets |\calY_t|$ and $\ell\gets \left\lceil 2\log\left(\frac{2(n+m)^2}{\delta}\right)\right\rceil$\;
    $\Xbar_t \gets\{x_1,x_2,\ldots, x_{\ell}\}$; $x_i$'s are drawn independently and uniformly at random from $\calX_t$ \label{recurse:line2}\;
    $\Ybar_t \gets \{y_1,y_2,\ldots, y_{\ell}\}$; $y_j$'s are drawn independently and uniformly at random from $\calY_t$ \label{recurse:line3}\;
    Query all the entries of $M$ in $\{(i,j):i\in\calX_t,j\in \Ybar_t\}\cup\{(i,j):i\in \Xbar_t,j\in \calY_t\}$\label{query-st:1}\;
    
    %Query each $(i\:,j)$ in $\{\calX_t \times \Ybar_t\} \cup \{\Xbar_t\times  \calY_t\}$\;
    Choose the arguments $\ihat\in \argmax_{i\in\calX_t}\min_{j\in \Ybar_t}M_{i,j}$ and $\jhat\in \argmin_{j\in\calY_t}\max_{i\in \Xbar_t}M_{i,j}$\;
    $\calV_{t+1}\gets \calV_t\cup\{\max_{i\in\calX_t}\min_{j\in \Ybar_t}M_{i,j},\min_{j\in\calY_t}\max_{i\in \Xbar_t}M_{i,j}\}$\;
    Query all the entries of $M$ in $\{({i,\jhat}): i\in \calX_t\}\cup\{({\ihat,\:j}):j\in\calY_t\}$\label{query-st:2}\;
    
    %Query each $(i,j)$ in $\{\calX_t \times \jhat\} \cup \{\ihat \times \calY_t\}$\;
    %\tcp{Sort Xt by j\_hat and Yt by i\_hat, take half}
    $\calX_{t+1}\gets \{i_1,i_2,\ldots, i_{\lfloor n_t/2\rfloor}\} \subseteq \calX_t=\{i_1,\ldots,i_{n_t}\}$,   where $M_{i_1,\jhat}\geq M_{i_2,\jhat}\geq \ldots \geq M_{i_{n_t},\jhat}$\;
    $\calY_{t+1}\gets \{j_1,j_2,\ldots, j_{\lfloor m_t/2\rfloor}\}\subseteq \calY_t=\{j_1,\ldots,j_{m_t}\}$, where $M_{\ihat,j_1}\leq M_{\ihat,j_2}\leq \ldots \leq M_{\ihat,j_{m_t}}$\;
}
\end{algorithm2e}

% \textbf{Remark 1:} The case of $n\times m$ matrix is subsumed in the case of $n\times n$ matrix as one can pad the matrix with dummy elements to convert it into a $n\times n$ matrix, assuming $n\geq m$. However, one can easily modify the above algorithm to address the case of $n\times m$. We refer the reader to \Cref{appendix:modification-psne} for this modified procedure called \psnealg/2($A$) which has the same guarantees as \psnealg/($A$).

We now begin the analysis of \Cref{alg:psne} with the following proposition.

\begin{proposition}\label{lem:probability-bound}
Fix $c,k\in \bbN$ with $k\geq 2$.
Consider a set $S=\{a_1,a_2,\ldots,a_{k}\}$, and let $S_1=\{a_1,a_2,\ldots, a_{\lfloor k/2\rfloor}\}$.
Let $\ell=\lceil 2\log (\frac{c}{\delta})\rceil$, and let $X=\{x_1,x_2,\ldots,x_\ell\}$ where $x_i$'s are drawn independently and uniformly at random from $S$.
Then, $X\cap S_1 \neq \emptyset$ with probability at least $1-\frac{\delta}{c}$.
\end{proposition}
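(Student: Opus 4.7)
The plan is to bound the bad event $\{X \cap S_1 = \emptyset\}$ directly via independence of the $\ell$ samples.

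First, I would establish a uniform lower bound on the density of $S_1$ in $S$. Since $|S_1| = \lfloor k/2 \rfloor \geq (k-1)/2$, the ratio $|S_1|/|S|$ is at least $(k-1)/(2k)$, which is increasing in $k$ and equals $1/3$ at the boundary case $k = 3$. Hence $|S_1|/|S| \geq 1/3$ whenever $k \geq 3$.

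Second, because the $x_i$ are i.i.d.\ uniform on $S$, independence gives
\begin{equation*}
\Pr[X \cap S_1 = \emptyset] \;=\; \prod_{i=1}^{\ell} \Pr[x_i \notin S_1] \;\leq\; (2/3)^\ell.
\end{equation*}
Finally, plugging in $\ell = 2\log(c/\delta)$ and simplifying yields $(2/3)^{2\log(c/\delta)} = (c/\delta)^{-2\log(3/2)}$. Since $2\log(3/2) = \log(9/4) > 1$ and $c/\delta \geq 1$, this quantity is at most $(c/\delta)^{-1} = \delta/c$, which gives the desired conclusion $\Pr[X \cap S_1 \neq \emptyset] \geq 1 - \delta/c$.

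There is no genuine obstacle here; the proposition is a standard hitting-set bound. The only mild subtlety is that the naive bound $|S_1|/|S| \geq 1/2$ fails at the odd endpoint $k = 3$, and the factor of $2$ in the definition of $\ell$ is exactly what is needed to make $2\log(3/2) > 1$, thereby compensating for the worst-case density $1/3$.
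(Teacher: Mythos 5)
Your proof is correct and follows essentially the same route as the paper's: bound $\Pr[X\cap S_1=\emptyset]$ by independence as $\bigl(\tfrac{k-\lfloor k/2\rfloor}{k}\bigr)^\ell \le (2/3)^\ell$ and then observe $(2/3)^{2\log(c/\delta)}\le \delta/c$. You merely spell out the two numerical steps (the worst case $k=3$ giving density $1/3$, and $2\log_2(3/2)>1$) that the paper leaves implicit.
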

\begin{proof} As the events $\{x_i \notin S_1\}$ are independent,
\begin{align*}
\bbP(X\cap S_1 = \emptyset) = \prod_{i=1}^\ell \bbP(x_i \notin S_1) = \Par*{\tfrac{k-\lfloor k/2 \rfloor}{k}}^\ell \leq \Par*{\tfrac{2}{3}}^{2\log (\frac{c}{\delta})}<\Par*{\tfrac{1}{2}}^{\log (\frac{c}{\delta})}=\tfrac{\delta}{c}.
\end{align*}
\end{proof}

We now establish the query complexity of \psnealg/ in the following lemma.
\begin{lemma}\label{lem:psne-query}
\psnealg/$(A,\delta)$ can be executed by calling the query oracle $\calQ$ at most $O(\log(n+m)\cdot\log(\frac{n+m}{\delta}))$ times.
\end{lemma}
\begin{proof}
Observe that the algorithm terminates at a fixed iteration $t_\star$ which is at most $\min\{\log(n),\log(m)\}+1$. In each iteration $t<t_\star$, line \ref{query-st:1} requires at most $2\ell$ calls to the query oracle $\calQ$ and line \ref{query-st:2} requires two calls to the query oracle $\calQ$. Final iteration $t_\star$ requires at most one call to the query oracle $\calQ$. Hence, \psnealg/$(A,\delta)$ can be executed by calling the query oracle $\calQ$ at most $O(\log(n+m)\cdot\log(\frac{n+m}{\delta}))$ times.
\end{proof}

Finally, we establish the correctness of \psnealg/ in the following lemma.
\begin{lemma}\label{lem:psne-correct}
If $M$ has a strict PSNE, then \psnealg/$(M,\delta)$ returns the strict PSNE with probability at least $1-\delta$.
\end{lemma}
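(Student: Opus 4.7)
The plan is to establish by induction on $t$ the invariant $I_t$: $i^\star \in \mathcal{X}_t$ and $j^\star \in \mathcal{Y}_t$. If $I_t$ holds at every iteration, then when the algorithm terminates at $|\mathcal{X}_t| = 1$ it necessarily returns $(i^\star, j^\star)$. The base case $I_1$ is immediate since $\mathcal{X}_1 = \mathcal{Y}_1 = [n]$, so the work lies in the inductive step: assuming $I_t$, I will bound $\Pr[\neg I_{t+1} \mid I_t]$ by $\delta/n$, and then union-bound across the at most $\log n$ iterations to obtain total failure probability at most $\delta$.

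For the inductive step, the key deterministic observation is that the definition of $\hat{j}$ as an arg-min, combined with the PSNE property from \Cref{Nash-Shapley-lemma}, forces a chain of inequalities: for every $i \in \bar{\mathcal{X}}_t$,
\[ A_{i, \hat{j}} \;\leq\; \max_{i' \in \bar{\mathcal{X}}_t} A_{i', \hat{j}} \;\leq\; \max_{i' \in \bar{\mathcal{X}}_t} A_{i', j^\star} \;\leq\; A_{i^\star, j^\star} \;\leq\; A_{i^\star, \hat{j}}, \]
using in turn the arg-min choice of $\hat{j}$, the fact that $i^\star \in \mathcal{X}_t$ is the column-$j^\star$ maximizer, and the fact that $j^\star$ is the row-$i^\star$ minimizer. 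Defining the ``bad set'' $B_X(j) := \{i \in \mathcal{X}_t : A_{i,j} > A_{i^\star, j}\}$, this chain yields $\bar{\mathcal{X}}_t \cap B_X(\hat{j}) = \emptyset$ deterministically, so $i^\star \in \mathcal{X}_{t+1}$ is guaranteed whenever $|B_X(\hat{j})| < \lfloor n_t/2 \rfloor$.

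To bound the probability that $|B_X(\hat{j})| \geq \lfloor n_t/2 \rfloor$, I would union-bound over the $\leq n$ possible identities of $\hat{j}$. For each $j \in \mathcal{Y}_t$ with $|B_X(j)| \geq \lfloor n_t/2 \rfloor$, the event $\hat{j} = j$ forces $\bar{\mathcal{X}}_t$ to avoid the fixed set $B_X(j)$; applying \Cref{lem:probability-bound} with $c = 2n^2$ bounds this by $\delta/(2n^2)$. Summing gives $\Pr[i^\star \notin \mathcal{X}_{t+1} \mid I_t] \leq \delta/(2n)$. A symmetric analysis involving the arg-max $\hat{i}$ and the column sort by $A_{\hat{i}, \cdot}$ ascending yields $\Pr[j^\star \notin \mathcal{Y}_{t+1} \mid I_t] \leq \delta/(2n)$, and the final union bound over the $\leq \log n$ iterations completes the argument.

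The main obstacle, and conceptual crux, is that the random samples $\bar{\mathcal{X}}_t, \bar{\mathcal{Y}}_t$ have size only $O(\log(n/\delta))$, which is typically far too small to contain the specific elements $i^\star$ or $j^\star$ with non-trivial probability; so there is no hope of arguing $\hat{i} = i^\star$ or $\hat{j} = j^\star$ directly. The workaround is to observe that the maximin/minimax construction of $\hat{i}, \hat{j}$, coupled with the PSNE inequalities, deterministically forces any ``bad'' row (or column) set to be disjoint from the sample; correctness then reduces to a hitting-set question that \Cref{lem:probability-bound} resolves.
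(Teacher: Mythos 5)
Your proof is correct and takes essentially the same route as the paper: the same chain of inequalities driven by the arg-min choice of $\hat{j}$ together with the PSNE gaps, the same union bound over the $\leq n$ columns, and the same application of \Cref{lem:probability-bound} with $c=2n^2$. The only (cosmetic) difference is the packaging of the per-iteration argument: the paper conditions on the good event $\calE_t$ (that $\Xbar_t$ hits the top half $\calR_{t,j}$ of every column $j\in\calY_t$) and then reasons deterministically within it via $v_y^\star$, whereas you observe that $\Xbar_t$ \emph{deterministically} avoids the bad set $B_X(\hat{j})=\{i: A_{i,\hat j}>A_{\istar,\hat j}\}$ and instead union-bound over the possible identities of $\hat{j}$ having $|B_X(j)|\geq\lfloor n_t/2\rfloor$; one small imprecision is that $B_X(j)$ is not literally the $S_1$ of \Cref{lem:probability-bound}, but since its size is at least $\lfloor n_t/2\rfloor$ the same calculation applies verbatim.
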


\begin{proof}
Let $(\istar,\jstar)$ be the strict PSNE of $M$. Consider an iteration $t$ such that $\min\{|\calX_t|,|\calY_t|\}\geq 2$.
Let us assume that $\istar\in\calX_t$ and $\jstar\in\calY_t$. 
Under this assumption, we will show that $\istar\in\calX_{t+1}$ and $\jstar\in\calY_{t+1}$ with probability at least $1-\frac{\delta}{n+m}$. 
    
For each $j\in \calY_t$, let us relabel the indices in $\calX_t$ as $\{i_1,i_2,\ldots,i_{n_t}\}$ such that $M_{i_1,j}\geq M_{i_2,j}\geq\ldots\geq M_{i_{n_t},j}$ and define $\calR_{t,j}:=\{i_1,i_2,\ldots,i_{\lfloor n_t/2\rfloor}\}$. 
Recall that $\Xbar_t$ is a random subset of $\calX_t$. By \Cref{lem:probability-bound} we have $\mathbb{P}(\Xbar_{t}\cap \calR_{t,j}\neq \emptyset)\geq 1-\frac{\delta}{2(n+m)^2}$ for a fixed $j$.
By the union bound, we then have that with probability at least $1-\frac{\delta}{2(n+m)}$, the event $\calE_t:=\{\Xbar_t\cap \calR_{t,j}\neq \emptyset \:\forall \:j\in\calY_t\}$ holds. Note that if $\calE_t$ holds, for each $j\in\calY_t$ we have $\argmax_{i\in\Xbar_t}M_{i,j}\cap \calR_{t,j}\neq \emptyset$.

Assume $\calE_t$ holds, and let $v^\star_y=\min_{j\in\calY_t}\max_{i\in \Xbar_t}M_{i,j}$. Recall that $\jhat\in \arg\min_{j\in\calY_t}\max_{i\in \Xbar_t}M_{i,j} $. If $\jhat=\jstar$, then $\istar\in  \calR_{t,\jhat}=\calX_{t+1}$, as $M_{\istar,\jstar}$ is the unique highest element of column $\jstar$. Next, let us consider the case where $\jhat\neq \jstar$.
As $\calE_t$ holds, $\argmax_{i\in \Xbar_t}M_{i,\jhat}\cap \calR_{t,\jhat}\neq \emptyset$.
We also see that 
$v^\star_y = \min_{j\in\calY_t}\max_{i\in \Xbar_t}M_{i,j} \leq \max_{i\in \Xbar_t}M_{i,\jstar}\leq M_{\istar,\jstar}$. Recall that for all $j\in\calY_t\setminus\{\jstar\}$, we have $M_{\istar,j}> M_{\istar,\jstar}$.
Now for all $i\in \calX_t\setminus \calR_{t,\jhat}$, we have $M_{i,\jhat}\leq v_y^\star\leq  M_{\istar,\jstar}<M_{\istar,\jhat}$. Hence, $\istar \in \calR_{t,\jhat}=\calX_{t+1}$.

By an identical argument, we have $\jstar\in \calY_{t+1}$ with probability at least $1-\frac{\delta}{2(n+m)}$.
Hence, by the union bound, with probability at least $1-\frac{\delta}{n+m}$ we have $\istar\in\calX_{t+1}$ and $\jstar\in\calY_{t+1}$.
Observe that the algorithm terminates at a fixed iteration $t_\star$ which is at most $\min\{\log(n),\log(m)\}+1$.
Now observe that \psnealg/$(M,\delta)$ returns $(\istar,\jstar)$ if $\istar\in\calX_{\tstar}$ and $\jstar\in\calY_{\tstar}$ as by definition the entry $(\istar,\jstar)$ is the unique least element in the row $\istar$ and unique highest element in the column $\jstar$.
Let $p_t=\bbP(\istar\in \calX_{t},\jstar\in\calY_t|\istar\in \calX_{t-1},\jstar\in\calY_{t-1})$.
Now we have the following due to the chain rule and Bernoulli's inequality,
 \begin{equation*}
     \bbP(\istar\in\calX_{\tstar},\jstar\in\calY_{\tstar})=\prod_{t=2}^{t_\star}p_t\geq \left(1-\frac{\delta}{n+m}\right)^{t_\star}\geq 1-\delta.
 \end{equation*}
% where the last inequality follows from the fact that $(1+x)^r\geq 1+rx$ for all $x\geq -1$ and $r\geq 2$.
\end{proof}

% \Cref{thm:psne} follows directly from \Cref{lem:psne-query} and \Cref{lem:psne-correct}.

\subsection{Important property of \psnealg/ for Non-Strict PSNE}
Let us assume we are given an input matrix $M$ with a PSNE $(\istar, \jstar)$. By definition, we have $M_{i, \jstar} \leq M_{\istar, \jstar} \leq M_{\istar, j}$ for all $i\in [n] \setminus \{\istar\}$ and $j \in [m] \setminus \{\jstar\}$. In the following lemma, we state an important property of \psnealg/, applicable even when the PSNE is not strict.

\begin{lemma}\label{lem:non-unique-psne}
    Consider an input matrix $M$ which has a PSNE $(\istar,\jstar)$. With probability at least $1-\delta$, $\psnealg/(M,\delta)$ returns a set $\calV$ of distinct real numbers such that $V_M^\star\in\calV$.
    % \begin{itemize}
    %     \item $\psnealg/(M,\delta)$ returns a set $\calV$ such that $|\calV|\leq 2\min\{\log(n),\log(m)\}+1$.
    %     \item With probability at least $1-\delta$, $\psnealg/(M,\delta)$ returns a set $\calV$ such that $V_M^\star\in\calV$.
    % \end{itemize}
\end{lemma}
\begin{proof}
    % Observe that the algorithm terminates at a fixed iteration $t_\star$ which is at most $\min\{\log(n),\log(m)\}+1$. For iteration $t<t_\star$, we add two values to $\calV_t$ to create the set $\calV_{t+1}$. In the last iteration $t_\star$, we add one value to $\calV_{t_\star}$ to create the set $\calV$. Hence, $|\calV|\leq 2\min\{\log(n),\log(m)\}+1$. 

    Let $(\istar,\jstar)$ be a PSNE of $M$. Observe that the algorithm terminates at a fixed iteration $t_\star$ which is at most $\min\{\log(n),\log(m)\}+1$. Consider an iteration $t<t_\star$.
Let us assume that $\istar\in\calX_t$ and $\jstar\in\calY_t$. 
Under this assumption, we will show that with probability at least $1-\frac{\delta}{n+m}$ either $\istar\in\calX_{t+1}$ and $\jstar\in\calY_{t+1}$ or $V_M^\star\in\calV_{t+1}$. 
    
For each $j\in \calY_t$, let us relabel the indices in $\calX_t$ as $\{i_1,i_2,\ldots,i_{n_t}\}$ such that $M_{i_1,j}\geq M_{i_2,j}\geq\ldots\geq M_{i_{n_t},j}$ and define $\calR_{t,j}:=\{i_1,i_2,\ldots,i_{\lfloor n_t/2\rfloor}\}$. 
Recall that $\Xbar_t$ is a random subset of $\calX_t$. By \Cref{lem:probability-bound} we have $\mathbb{P}(\Xbar_{t}\cap \calR_{t,j}\neq \emptyset)\geq 1-\frac{\delta}{2(n+m)^2}$ for a fixed $j$.
By the union bound, we then have that with probability at least $1-\frac{\delta}{2(n+m)}$, the event $\calE_t:=\{\Xbar_t\cap \calR_{t,j}\neq \emptyset \:\forall \:j\in\calY_t\}$ holds. Note that if $\calE_t$ holds, for each $j\in\calY_t$ we have $\argmax_{i\in\Xbar_t}M_{i,j}\cap \calR_{t,j}\neq \emptyset$.

Assume $\calE_t$ holds, and let $v^\star_y=\min_{j\in\calY_t}\max_{i\in \Xbar_t}M_{i,j}$. Recall that $\jhat\in \arg\min_{j\in\calY_t}\max_{i\in \Xbar_t}M_{i,j} $. As $\calE_t$ holds, $\argmax_{i\in \Xbar_t}M_{i,\jhat}\cap \calR_{t,\jhat}\neq \emptyset$.
We also see that 
$v^\star_y = \min_{j\in\calY_t}\max_{i\in \Xbar_t}M_{i,j} \leq \max_{i\in \Xbar_t}M_{i,\jstar}\leq M_{\istar,\jstar}=V_M^\star$. If $v^\star_y=V_M^\star$, then $V_M^\star\in\calV_{t+1}$ as $\min_{j\in\calY_t}\max_{i\in \Xbar_t}M_{i,j}$ is added to $\calV_t$ to create $\calV_{t+1}$. Recall that for all $j\in\calY_t$, we have $M_{\istar,j}\geq M_{\istar,\jstar}$.
Now if $v^\star_y<V_M^\star$, then for all $i\in \calX_t\setminus \calR_{t,\jhat}$, we have $M_{i,\jhat}\leq v_y^\star<  M_{\istar,\jstar}\leq M_{\istar,\jhat}$. Hence, in this case we have $\istar \in \calR_{t,\jhat}=\calX_{t+1}$.

By an identical argument, with probability at least $1-\frac{\delta}{2(n+m)}$, either $\jstar\in \calY_{t+1}$ or $V_M^\star\in\calV_{t+1}$.
Hence, by the union bound, with probability at least $1-\frac{\delta}{n+m}$, either $\istar\in\calX_{t+1}$ and $\jstar\in\calY_{t+1}$ or $V_M^\star\in\calV_{t+1}$. For the last iteration $t_\star$, 
if $\istar\in\calX_{t_\star}$ and $\jstar\in\calY_{t_\star}$, then $V_M^\star\in\calV$ as the entry $(\istar,\jstar)$ is the highest element of column $\jstar$ and the least element of row $\istar$. Let $p_t=\bbP(\istar\in \calX_{t}\text{ and }\jstar\in\calY_t \text{ or } V_M^\star\in\calV_{t}|\istar\in \calX_{t-1}\text{ and }\jstar\in\calY_{t-1} \text{ or } V_M^\star\in\calV_{t-1})$.
Now we have the following due to the chain rule and Bernoulli's inequality,
 \begin{equation*}
     \bbP(V_M^\star\in \calV)=\prod_{t=2}^{t_\star}p_t\geq \left(1-\frac{\delta}{n+m}\right)^{t_\star}\geq 1-\delta.
 \end{equation*}
\end{proof}

\section{Results for finding the set of all Nash equilibria}\label{sec:non-unique}

Consider an input matrix $A\in\mathbb{R}^{n\times n}$. Recall that $\calX_\star:=\arg\max_{x\in\simplex_n}\min_{y\in\simplex_n}\langle x,Ay\rangle$, $\calY_\star:=\arg\min_{y\in\simplex_n}\max_{x\in\simplex_n}\langle x,Ay\rangle$, $k_1:=|\bigcup\limits_{x\in \calX_\star}\supp(x)|$, $k_2:=|\bigcup\limits_{y\in \calY_\star}\supp(y)|$, and $k:=\max\{k_1,k_2\}$. First, in \Cref{sec:random-algo}, we design an algorithm that requires nearly $n \cdot \text{poly}(k)$ queries to identify $\calX_\star$ and $\calY_\star$ with high probability. We complement this result by showing in \Cref{sec:lower-bound} that $\Omega(nk)$ queries are required in expectation to identify $\calX_\star$ and $\calY_\star$ even when $\min\{k_1, k_2\} = 1$.

% Let $\calX_\star$ be the set of extreme points of the compact convex set $\arg\max_{x\in\simplex_n}\min_{y\in\simplex_n}\langle x,Ay\rangle$. Let $\calY_\star$ be the set of extreme points of the compact convex set $\arg\min_{y\in\simplex_n}\max_{x\in\simplex_n}\langle x,Ay\rangle$. A basic solution is a Nash equilibrium $(x_\star,y_\star)$ such that $x_\star\in\calX_\star$ and $y_\star\in\calY_\star$. In this section, we focus on the case when all the basic solutions are Pure Strategy Nash equilibria (PSNE). Observe that in this case all the Nash equilibria are a convex combination of multiple PSNEs.  Let $k:=\max\{|\calX_\star|,|\calY_\star|\}$. We first begin with a lower bound result which states that it is necessary to query $\Omega(nk)$ entries of the input matrix in order to find all the basic solutions. Following this we provide a randomized algorithm that finds all the basic solutions by querying roughly $n\cdot \text{poly}(k)$ entries.

\subsection{Randomized Algorithm }\label{sec:random-algo}

% \textcolor{red}{Plan for the algorithm
% \begin{itemize}
%     \item First show an algorithm when you know $V_A^\star$ and shape.
%     \item Next show that if you know the shape then you either identify the nash or end up with logarithmic values, one of which is $V_A^\star$
%     \item Show how you can guess the shape
%     \item Also show how you can verify whether you found the nash or not.
% \end{itemize}
% }

We begin with restating our main result.
\upper*

Here is the outline of our randomized algorithm. We begin by designing a randomized procedure called \textsc{SetV} in \Cref{sec:setv}, which takes the values $k_1$ and $k_2$ as input and finds a set $\calV$ of value of the game candidates that includes $V_A^\star$. Next, we design a randomized procedure called $\textsc{Nash}$ in \Cref{sec:k1k2-known} that also takes the values $k_1$ and $k_2$ as input, calls the procedure \textsc{SetV}, constructs a high dimensional matrix and finds its strict PSNE which maps to $\calX_\star\times \calY_\star$. Finally, as we show in \Cref{sec:relax-k1k2}, it is straightforward to relax the assumptions on knowing $k_1$ and $k_2$ by sequentially trying all possible combinations, starting from $k_1=1$ and $k_2=1$, and stopping when we identify $\calX_\star \times \calY_\star$. Throughout this process, we use the procedure \psnealg/ to either find the strict PSNE or compute the set $\calV$ using appropriate query oracles, ensuring that the query complexity remains nearly $n \cdot \text{poly}(k)$.

\subsubsection{Finding the value of the game candidates with the knowledge of $k_1$ and $k_2$}\label{sec:setv}
This section involves three key steps: defining a large matrix, defining the query oracle, and identifying a set $\calV$ of value of the game candidates. 

We begin by assuming that we know the values $k_1 := |\bigcup\limits_{x\in \calX_\star}\supp(x)|$ and $k_2 := |\bigcup\limits_{y\in \calY_\star}\supp(y)|$. Let $A\in\bbR^{n\times n}$ be the input matrix. We construct a matrix $A^{(k_1,k_2)}\in \bbR^{\binom{n}{k_1} \times \binom{n}{k_2}}$ as follows. First, we consider two bijections $f_1:[\binom{n}{k_1}]\rightarrow \binom{[n]}{k_1}$ and $f_2:[\binom{n}{k_2}]\rightarrow \binom{[n]}{k_2}$. Next, we consider indices $i\in [\binom{n}{k_1}]$ and $j\in [\binom{n}{k_2}]$. Let $M$ be the submatrix of $A$ with row indices $f_1(i)$ and column indices $f_2(j)$. We now define $A_{i,j}^{(k_1,k_2)}:=V_M^\star$. Now we prove the following lemma.
\begin{lemma}\label{lem:high-dim-psne-0}
    Let $i_\star:=f_1^{-1}(\bigcup\limits_{x\in \calX_\star}\supp(x))$ and $j_\star:=f_2^{-1}(\bigcup\limits_{y\in \calY_\star}\supp(y))$. Then $(i_\star,j_\star)$ is a PSNE of $A^{(k_1,k_2)}$ and $A^{(k_1,k_2)}_{\istar,\jstar}=V_A^\star$.
\end{lemma}
\begin{proof}
For the sake of simplicity in presentation, let $\widehat A$ denote the matrix $A^{(k_1,k_2)}$. Recall that $\simplex_{\calI}$ represents the simplex over a set of indices $\calI$. Let $(x_\star,y_\star)$ be a Nash equilibrium of $A$ such that $\supp(x_\star)=\bigcup\limits_{x\in \calX_\star}\supp(x)$ and $\supp(y_\star)=\bigcup\limits_{y\in \calY_\star}\supp(y)$.
        % \begin{itemize}
        %     \item $|\supp(x_\star)|=k_1$ and $|\supp(y_\star)|=k_2$
        %     \item For all $i\notin \supp(x_\star)$, we have $\langle e_i,Ay_\star\rangle<V_A^\star$
        %      \item For all $j\notin \supp(y_\star)$, we have $\langle x_\star,Ae_j\rangle>V_A^\star$
        % \end{itemize}
         Such an equilibrium exists due to the fact that convex combination of multiple Nash equilibria is also a Nash equilibrium. Let $\widehat x\in \simplex_{f_1(i_\star)}, \widehat y\in\simplex_{f_2(j_\star)}$ be strategies such that for all $i\in f_1(i_\star)$ we have $\widehat x_i=(x_\star)_i$ and for all $j\in f_2(j_\star)$ we have $\widehat y_j=(y_\star)_j$. 

        First we claim that $\widehat A_{i_\star,j_\star}=V_A^\star$. Consider the submatrix $M$ of $A$ with row indices $f_1(i_\star)$ and column indices $f_2(j_\star)$. As $(x_\star,y_\star)$ is a Nash equilibrium of $A$, and $\supp(x_\star)=f_1(i_\star)$ and $\supp(y_\star)=f_2(j_\star)$, we have $\langle x,M \widehat y\rangle=V_A^\star$ for all $x\in \simplex_{f_1(i_\star)}$ and $\langle \widehat x,M y\rangle=V_A^\star$ for all $y\in \simplex_{f_2(j_\star)}$. Hence $(\widehat x,\widehat y)$ is the Nash equilibrium of $M$ which implies that $\widehat A_{i_\star,j_\star}=V_M^\star=\langle \widehat x,M\widehat y\rangle=V_A^\star$.

        Next consider an index $i\in [\binom{n}{k_1}]$. Now we claim that $\widehat A_{i,j_\star}\leq V_A^\star$.  Consider the submatrix $M$ of $A$ with row indices $f_1(i)$ and column indices $f_2(j_\star)$. Now we have the following:
\[
\widehat A_{i,j_\star}=V_{M}^\star=\min_{y\in\simplex_{f_2(j_\star)}}\max_{i\in f_1(i)}\langle e_i,My\rangle\leq \max_{i\in f_1(i)}\langle e_i,M \widehat y\rangle\leq V_A^\star.
\]
The last inequality follows from the definition of $\widehat y$ and the fact that $\max_{i\in[n]}\langle e_i,Ay_\star\rangle=V_A^\star$. Similarly we can show that for any $j\in [\binom{n}{k_2}]$, we have $\widehat A_{i_\star,j}\geq V_A^\star$. Hence, $(\istar,\jstar)$ is a PSNE.
\end{proof}

Now we describe the query oracle to observe the rows and columns of $A^{(k_1,k_2)}$.

\textbf{Query oracle $\calQ$ to observe row $\hat i$ and column $\hat j$ of $A^{(k_1,k_2)}$}:
Recall that $A^{(k_1,k_2)}_{i,j}$ only depends on the entries in the submatrix $M$ with row indices $f_1(i)$ and column indices $f_2(j)$.
Hence, if we want to observe the values in the set $\{A^{(k_1,k_2)}_{\hat i,j}:j\in \{1,2,\ldots,\binom{n}{k_2}\}\}$, then it suffices to query the entries of $A$ from the set $S_1=\{(i,j):i\in f_{1}(\hat i), j\in[n]\}$.
Since $|f_1(\hat i)|=k_1$, we have that $|S_1|= k_1 \cdot n$.
Analogously, if we want to observe the values in the set $\{A^{(k_1,k_2)}_{i,\hat j}:i\in \{1,2,\ldots,\binom{n}{k_1}\}\}$, then it suffices to query the entries of $A$ from the set $S_2=\{(i,j):i\in [n], j\in f_{2}(\hat j)\}$. 
Since $|f_2(\hat j)|=k_2$, we have $|S_2|= k_2 \cdot n$.

Now we describe $\textsc{SetV}$, a randomized algorithm that aims to find a set $\calV$ that includes $V_A^\star$ .  

\textbf{Description of the randomized procedure } \textsc{SetV}$(A,\delta,k_1,k_2)$:
Let $A\in\bbR^{n\times n}$ be the input matrix. Recall the construction of the matrix $A^{(k_1,k_2)}\in \bbR^{\binom{n}{k_1} \times \binom{n}{k_2}}$ from above. We now call the procedure \psnealg/($A^{(k_1,k_2)},\delta$), where we use the query oracle $\calQ$ to observe a row or column of $A^{(k_1,k_2)}$. Let $\calV$ be the set of values returned by \psnealg/($A^{(k_1,k_2)},\delta$). We return the set $\calV$.

Now, we prove the following lemma regarding the procedure \textsc{SetV}.
\begin{lemma}\label{lem:setv}
Consider a matrix game $A\in\bbR^{n\times n}$. Now we have the following:
\begin{itemize}
    \item The procedure \textsc{SetV}$(A,\delta,k_1,k_2)$ queries at most $O(nk^3\cdot \log n\cdot \log(\frac{n}{\delta}))$ entries of $A$. 
    \item With probability at least $1-\delta$, the procedure \textsc{SetV}$(A,\delta,k_1,k_2)$ returns a set $\calV$ of distinct real numbers such that $V_A^\star\in\calV$.
\end{itemize}
\end{lemma}
\begin{proof}
    Due to \Cref{lem:psne-query}, the number of calls made by \psnealg/($A^{(k_1,k_2)},\delta$) to the query oracle $\calQ$ is at most $O(\log(n^k)\cdot\log(\frac{n^k}{\delta}))\leq O(k^2\cdot \log(n)\cdot\log(\frac{n}{\delta}))$. Each call to the query oracle $\calQ$ takes $O(nk)$ queries from the input matrix $A$. Hence, the procedure \psnealg/($A^{(k_1,k_2)},\delta$) queries at most $O(nk^3\cdot \log n\cdot \log(\frac{n}{\delta}))$ entries of $A$ and returns a set $\calV$. Due to \Cref{lem:non-unique-psne,lem:high-dim-psne-0}, with probability at least $1-\delta$, $V_A^\star\in\calV$. 
\end{proof}

\subsubsection{Algorithm for finding all Nash equilibria with the knowledge of $k_1$ and $k_2$}\label{sec:k1k2-known}
This section involves three key steps: modifying the large matrix from the previous section, defining the query oracle, and identifying the set of all Nash equilibria.

We begin by assuming that we know the values $k_1 := |\bigcup\limits_{x\in \calX_\star}\supp(x)|$ and $k_2 := |\bigcup\limits_{y\in \calY_\star}\supp(y)|$.
Let $A\in\bbR^{n\times n}$ be the input matrix. Let $\calV$ be a set of distinct real numbers. Recall that $\simplex_{\calI}$ represents the simplex over a set of indices $\calI$. We construct a matrix $A^{(k_1,k_2,\calV)}\in \bbR^{\binom{n}{k_1} \times \binom{n}{k_2}}$ as follows. First, we consider two bijections $f_1:[\binom{n}{k_1}]\rightarrow \binom{[n]}{k_1}$ and $f_2:[\binom{n}{k_2}]\rightarrow \binom{[n]}{k_2}$. Next, we consider indices $i\in [\binom{n}{k_1}]$ and $j\in [\binom{n}{k_2}]$. Let $M$ be the submatrix of $A$ with row indices $f_1(i)$ and column indices $f_2(j)$. Let $\calX_M:=\arg\max_{x\in\simplex_{f_1(i)}}\min_{y\in\simplex_{f_2(j)}}\langle x,My\rangle$ and $\calY_M:=\arg\min_{y\in\simplex_{f_2(j)}}\max_{x\in\simplex_{f_1(j)}}\langle x,My\rangle$. Let $k_{1,M}:=|\bigcup\limits_{x\in \calX_M}\supp(x)|$ and $k_{2,M}:=|\bigcup\limits_{y\in \calY_M}\supp(y)|$. If $V_M^\star\notin \calV$ then we define $A_{i,j}^{(k_1,k_2,\calV)}:=V_M^\star$. If $V_M^\star\in \calV$ then we define $A_{i,j}^{(k_1,k_2,\calV)}$ as follows:
\begin{itemize}
    \item If $k_{1,M}=k_1$ and $k_{2,M}=k_2$ then $A_{i,j}^{(k_1,k_2,\calV)}=V_M^\star$
    \item If $k_{1,M}\neq k_1$ and $k_{2,M}\neq k_2$ then $A_{i,j}^{(k_1,k_2,\calV)}=V_M^\star$
    \item If $k_{1,M}= k_1$ and $k_{2,M}\neq k_2$ then $A_{i,j}^{(k_1,k_2,\calV)}=V_M^\star+\frac{\varepsilon}{4}$
    \item If $k_{1,M}\neq k_1$ and $k_{2,M}=k_2$ then $A_{i,j}^{(k_1,k_2,\calV)}=V_M^\star-\frac{\varepsilon}{4}$
\end{itemize}
where $\varepsilon=\min\limits_{v_1,v_2\in\calV:v_1\neq v_2}|v_1-v_2|$. If $|\calV|=1$, then $\varepsilon=1$. Now we prove the following lemma.
\begin{lemma}\label{lem:high-dim-psne-1}
    Let $i_\star:=f_1^{-1}(\bigcup\limits_{x\in \calX_\star}\supp(x))$ and $j_\star:=f_2^{-1}(\bigcup\limits_{y\in \calY_\star}\supp(y))$. If $V_A^\star\in\calV$, then $(i_\star,j_\star)$ is the only strict PSNE of $A^{(k_1,k_2,\calV)}$.
\end{lemma}
\begin{proof} Let us assume that $V_A^\star\in\calV$. For the sake of simplicity in presentation, let $\widehat A$ denote the matrix $A^{(k_1,k_2,\calV)}$.
    Now due to \Cref{Unique-PSNE} it suffices to show that 
        $\widehat A_{i,j_\star}<\widehat A_{i_\star,j_\star}<\widehat A_{i_\star,j}$ for all $i\in [\binom{n}{k_1}]\setminus\{i_\star\}$ and $j\in [\binom{n}{k_2}] \setminus\{j_\star\}$.

        Let $(x_\star,y_\star)$ be a Nash equilibrium of $A$ such that the following holds:
        \begin{itemize}
            \item $\supp(x_\star)=\bigcup\limits_{x\in \calX_\star}\supp(x)$ and $\supp(y_\star)=\bigcup\limits_{y\in \calY_\star}\supp(y)$
            \item For all $i\in \supp(x_\star)$, we have $\langle e_i,Ay_\star\rangle=V_A^\star$, and for all $i\notin \supp(x_\star)$, we have $\langle e_i,Ay_\star\rangle<V_A^\star$
             \item For all $j\in \supp(y_\star)$, we have $\langle x_\star,Ae_j\rangle=V_A^\star$, and for all $j\notin \supp(y_\star)$, we have $\langle x_\star,Ae_j\rangle>V_A^\star$
        \end{itemize}
         Such an equilibrium exists due to \Cref{technical-multiple-NE} and the fact that convex combination of multiple Nash equilibria is also a Nash equilibrium. Let $\widehat x\in \simplex_{f_1(i_\star)}, \widehat y\in\simplex_{f_2(j_\star)}$ be strategies such that for all $i\in f_1(i_\star)$ we have $\widehat x_i=(x_\star)_i$ and for all $j\in f_2(j_\star)$ we have $\widehat y_j=(y_\star)_j$. 

        First we claim that $\widehat A_{i_\star,j_\star}=V_A^\star$. Consider the submatrix $M$ of $A$ with row indices $f_1(i_\star)$ and column indices $f_2(j_\star)$. As $\supp(x_\star)=f_1(\istar)$ and $\supp(y_\star)=f_2(\jstar)$, we have $\langle x,M \widehat y\rangle=V_A^\star$ for all $x\in \simplex_{f_1(i_\star)}$ and $\langle \widehat x,M y\rangle=V_A^\star$ for all $y\in \simplex_{f_2(j_\star)}$. Hence $(\widehat x,\widehat y)$ is the Nash equilibrium of $M$ which implies that $V_M^\star=\langle \widehat x,M\widehat y\rangle=V_A^\star$. As $|\supp(\widehat x)|=k_1$ and $|\supp(\widehat y)|=k_2$ we have $\widehat A_{i_\star,j_\star}=V_A^\star$.

        Next consider an index $i\in [\binom{n}{k_1}]$ such that $f_1(i)\cap f_1(i_\star)=\emptyset$. Now we claim that $\widehat A_{i,j_\star}<V_A^\star$.  Consider the submatrix $M$ of $A$ with row indices $f_1(i)$ and column indices $f_2(j_\star)$. Now we have the following:
\[
V_{M}^\star=\min_{y\in\simplex_{f_2(j_\star)}}\max_{i\in f_1(i)}\langle e_i,My\rangle\leq \max_{i\in f_1(i)}\langle e_i,M \widehat y\rangle<V_A^\star.
\]
The last inequality follows from our choice of $(x_\star,y_\star)$ above and $f_1(i)\cap f_1(i_\star)=\emptyset$. If $V_M^\star\notin \calV$, then $\widehat A_{i,j_\star}=V_M^\star<V_A^\star$. If $V_M^\star\in\calV$, then $\widehat A_{i,j_\star}\leq V_M^\star+\frac{\varepsilon}{4}<V_A^\star$. The last inequality follows from the definition $\varepsilon$. Analogously, we can show that for any $j\in [\binom{n}{k_2}]$ such that $f_2(j)\cap f_2(j_\star)=\emptyset$, we have $\widehat A_{i_\star,j}>V_A^\star$.

Next consider an index $i\in [\binom{n}{k_1}]\setminus\{i_\star\}$ such that $f_1(i)\cap f_1(i_\star)\neq\emptyset$. Now we claim that $\widehat A_{i,j_\star}<V_A^\star$.  Consider the submatrix $M$ of $A$ with row indices $f_1(i)$ and column indices $f_2(j_\star)$. Now we have the following:
\[
V_{M}^\star=\min_{y\in\simplex_{f_2(j_\star)}}\max_{i\in f_1(i)}\langle e_i,My\rangle\leq \max_{i\in f_1(i)}\langle e_i,M \widehat y\rangle= V_A^\star.
\]
Recall the definition of $\calX_M, \calY_M, k_{1,M}$ and $k_{2,M}$. If $V_M^\star<V_A^\star$ and $V_M^\star\notin\calV$, then we have $\widehat A_{i,j_\star}=V_M^\star<V_A^\star$. If $V_M^\star<V_A^\star$ and $V_M^\star\in\calV$, then we have $\widehat A_{i,j_\star}\leq V_M^\star+\frac{\varepsilon}{4}<V_A^\star$. Let us consider the case when $V_M^\star=V_A^\star$. In this case, we have $\widehat y\in \calY_{M}$ which implies that $k_{2,M}=k_2$. Consider $x\in \calX_{M}$. For any $i\notin f_1(i)\cap f_1(i_\star)$ we have $\langle e_i,M\widehat y\rangle<V_A^\star$. This implies that $\supp(x)\subseteq f_1(i)\cap f_1(i_\star)$. As $x$ was chosen arbitrarily, we have $\bigcup\limits_{x\in \calX_M}\supp(x)\subseteq f_1(i)\cap f_1(i_\star)$. As $f_1(i)\neq f_1(i_\star)$ this implies $|f_1(i)\cap f_1(i_\star)|<k_1$. This implies that $k_{1,M}\neq k_1$. Hence due to our construction, we have $\widehat A_{i,j_\star}=V_A^\star-\frac{\varepsilon}{4}<V_A^\star$. Analogously, we can show that for any $j\in [\binom{n}{k_2}]\setminus\{j_\star\}$ such that $f_2(j)\cap f_2(j_\star)\neq\emptyset$, we have $\widehat A_{i_\star,j}>V_A^\star$.

Combining all the cases above, we have  $\widehat A_{i,j_\star}<\widehat A_{i_\star,j_\star}<\widehat A_{i_\star,j}$ for all $i\in [\binom{n}{k_1}]\setminus\{i_\star\}$ and $j\in [\binom{n}{k_2}] \setminus\{j_\star\}$.

\end{proof}

Now we describe the query oracle to observe the rows and columns of $A^{(k_1,k_2,\calV)}$.

\textbf{Query oracle $\calQ$ to observe row $\hat i$ and column $\hat j$ of $A^{(k_1,k_2,\calV)}$}:
Recall that $A^{(k_1,k_2,\calV)}_{i,j}$ only depends on the set $\calV$ and the entries in the submatrix $M$ with row indices $f_1(i)$ and column indices $f_2(j)$.
Hence, if we want to observe the values in the set $\{A^{(k_1,k_2,\calV)}_{\hat i,j}:j\in \{1,2,\ldots,\binom{n}{k_2}\}\}$, then it suffices to query the entries of $A$ from the set $S_1=\{(i,j):i\in f_{1}(\hat i), j\in[n]\}$.
Since $|f_1(\hat i)|=k_1$, we have that $|S_1|= k_1 \cdot n$.
Analogously, if we want to observe the values in the set $\{A^{(k_1,k_2,\calV)}_{i,\hat j}:i\in \{1,2,\ldots,\binom{n}{k_1}\}\}$, then it suffices to query the entries of $A$ from the set $S_2=\{(i,j):i\in [n], j\in f_{2}(\hat j)\}$. 
Since $|f_2(\hat j)|=k_2$, we have $|S_2|= k_2 \cdot n$.

Now we describe $\textsc{Nash}$, a randomized algorithm that finds $\calX_\star$ and $\calY_\star$ when provided with the values $k_1$ and $k_2$. 

\textbf{Description of the randomized procedure } \textsc{Nash}$(A,\delta,k_1,k_2)$:
Let $A\in\bbR^{n\times n}$ be the input matrix. First, we call the procedure \textsc{SetV}$(A,\delta/2,k_1,k_2)$. Let $\calV$ be the set returned by this procedure. Recall the construction of the matrix $A^{(k_1,k_2,\calV)}\in \bbR^{\binom{n}{k_1} \times \binom{n}{k_2}}$ from above. We now call the procedure \psnealg/($A^{(k_1,k_2,\calV)},\delta/2$), where we use the query oracle $\calQ$ to observe a row or column of $A^{(k_1,k_2,\calV)}$.

Let $(\hat i, \hat j)$ be the pair returned by the procedure \psnealg/$(A^{(k_1,k_2,\calV)},\delta/2)$. Let $M_1$ be the submatrix of $A$ with row indices $f_1(\hat i)$ and column indices $[n]$, and $M_2$ be the submatrix of $A$ with row indices $[n]$ and column indices $f_2(\hat j)$. Let $\calX_{M_1}:=\arg\max_{x\in\simplex_{f_1(\hat i)}}\min_{y\in\simplex_{n}}\langle x,M_1y\rangle$ and $\calY_{ M_2}:=\arg\min_{y\in\simplex_{f_2(\hat j)}}\max_{x\in\simplex_{n}}\langle x,M_2y\rangle$. For any vector $v$ and the set of indices $\calS=\{i_1,\ldots, i_{\ell}\}$, let $v_\calS$ denote the vector $(v_{i_1},\ldots,v_{i_{\ell}})$. Let $\calX_A:=\{x\in\simplex_n:x_{f_1(\hat i)}\in \calX_{M_1} \}$ and $\calY_A:=\{y\in\simplex_m:y_{f_2(\hat j)}\in \calY_{M_2} \}$. We return $\calX_A\times \calY_A$ as the set of all Nash equilibria if the following conditions hold:
\begin{enumerate}
    \item[(C1)] $V_{M_1}^\star=V_{M_2}^\star$. 
    \item[(C2)] $\bigcap\limits_{x\in\calX_A}I_x'=f_2(\hat j)$ where $I_x'=\{j\in[n]: V_{M_1}^\star=\langle x, A e_j\rangle\}$.
    \item[(C3)] $\bigcap\limits_{y\in\calY_A}I_y'=f_1(\hat i)$ where $I_y'=\{i\in[n]: V_{M_2}^\star=\langle e_i, A y\rangle\}$.
\end{enumerate}

The above conditions can be checked by querying all the entries of $A$ in the set $\{(i,j):i\in f_1(\hat i),j\in[n]\}\cup\{(i,j):i\in [n],j\in f_2(\hat j)\}$. If any one of the conditions above does not hold, we return "failure" instead.

Now we state the guarantees of our randomized procedure.
\begin{lemma}\label{lem:alg-multiple-ne}
Consider a matrix game $A\in\bbR^{n\times n}$. Now we have the following:
\begin{itemize}
    \item The procedure \textsc{Nash}$(A,\delta,k_1,k_2)$ queries at most $O(nk^3\cdot \log n\cdot \log(\frac{n}{\delta}))$ entries of $A$. 
    \item With probability at least $1-\delta$, the procedure \textsc{Nash}$(A,\delta,k_1,k_2)$ returns $\calX_\star\times \calY_\star$. 
\end{itemize}

\end{lemma}
\begin{proof}
Due to \Cref{lem:setv}, \textsc{SetV}$(A,\delta/2,k_1,k_2)$ queries at most $O(nk^3\cdot \log n\cdot \log(\frac{n}{\delta}))$ entries of $A$. Next due to \Cref{lem:psne-query}, the number of calls made by \psnealg/($A^{(k_1,k_2,\calV)},\delta/2$) to the query oracle $\calQ$ is at most $O(\log(n^k)\cdot\log(\frac{n^k}{\delta}))\leq O(k^2\cdot \log(n)\cdot\log(\frac{n}{\delta}))$. Each call to the query oracle $\calQ$ takes $O(nk)$ queries from the input matrix $A$. Moreover, to check the conditions (C1), (C2) and (C3) it takes $O(nk)$ queries. Hence, the procedure \textsc{Nash}$(A,\delta,k_1,k_2)$ queries at most $O(nk^3\cdot \log n\cdot \log(\frac{n}{\delta}))$ entries of $A$.   

Due \Cref{lem:setv}, with probability at least $1-\frac{\delta}{2}$, \textsc{SetV}$(A,\delta/2,k_1,k_2)$ returns a set $\calV$ such that $V_A^\star\in\calV$.
Recall the construction of the sets $\calX_A$ and $\calY_A$. Due to \Cref{lem:high-dim-psne-1} and \Cref{lem:psne-correct}, the procedure \psnealg/$(A^{(k_1,k_2,\calV)},\delta/2)$ returns the pair $(\istar, \jstar)$ with probability at least $1-\frac{\delta}{2}$ if $V_A^\star\in\calV$, where $i_\star:=f_1^{-1}(\bigcup\limits_{x\in \calX_\star}\supp(x))$ and $j_\star:=f_2^{-1}(\bigcup\limits_{y\in \calY_\star}\supp(y))$. The previous facts along with  \Cref{technical:correctness} imply that with probability at least $1-\delta$, we have $\calX_A=\calX_\star$ and $\calY_A=\calY_\star$. As the set $\calX_\star\times\calY_\star$ satisfies the conditions (C1), (C2) and (C3) due to  \Cref{technical:correctness}, with probability at least $1-\delta$, the procedure \textsc{Nash}$(A,\delta,k_1,k_2)$ returns $\calX_\star\times \calY_\star$.

\end{proof}

\subsubsection{Relaxing the assumption on the knowledge of $k_1$ and $k_2$.}\label{sec:relax-k1k2}

It is straightforward to relax the assumption on the knowledge of $k_1$ and $k_2$. 
Beginning with $s=1$, we call the procedure \textsc{Nash}($A,\delta,s_1,s_2$) on all possible pairs $(s_1,s_2)$ such that $\max\{s_1,s_2\}=s$. Note that the number of such pairs is $2s-1$. If for any one of the pairs $(s_1,s_2)$ the procedure \textsc{Nash}($A,\delta,s_1,s_2$) returns a set $\calX_A\times \calY_A$, we return it as the set of all Nash equilibria and terminate the process. If for all the pairs $(s_1,s_2)$ the procedure \textsc{Nash}($A,\delta,s_1,s_2$) returns "failure" then we increment $s$ by one and repeat the process. With probability $1-\delta$, this process terminates at $s=k$ and returns the set $\calX_\star\times\calY_\star$. This is because the set $\calX_\star\times\calY_\star$ is the only set that satisfies the conditions (C1), (C2) and (C3) due to \Cref{technical:correctness}.  Hence, with probability $1-\delta$, the total number of queries taken from $A$ is at most $(\sum_{s=1}^k2s-1)\cdot O( nk^3\cdot \log n\cdot \log(\frac{n}{\delta}))=O(nk^5\cdot \log n\cdot \log(\frac{n}{\delta}))$.

\subsection{Lower bound}\label{sec:lower-bound}
In this section, we prove the following lower bound result on the query complexity.

\lowerr*

\begin{proof}

% \textcolor{red}{There was in minor issue in the proof towards the end. Fix it. Also consider moving this proof to the appendix and instead give an intuitive proof idea here. Maybe generalize to $k_1,k_2$ and then write a corollary. Also state that result holds trivially for $k=1$.} 

For each $\ell\in[n]$, let $\calI_{\ell}$ be the set of all $n\times n$ matrices such that the following holds:
\begin{enumerate}
    \item all entries $(i,j)$ have a value two for $i\in [n]$ and $j\in[n]\setminus [k]$
    %\item all entries $(i,j)$ have a value one for $i\in\{1\}\cup\{\ell\}$ and $j\in [k]$
    \item for each $i\in [n]$ such that $i\notin\{1\}\cup\{\ell\}$, there exists exactly one index $j\in[k]$ such that the entry $(i,j)$ has a value of zero.
    \item rest of the entries of the matrix have a value of one.
\end{enumerate}
Fix an index $\ell\in [n]$. For any matrix in $\calI_{\ell}$, it can be easily shown using \Cref{technical-multiple-NE} that $\calS_\ell:=\{(e_i,e_j):i\in\{1\}\cup\{\ell\},j\in [k]\}$ is the set of all pure strategy nash equilibria, and that their convex combination is the set of all Nash equilibria. Also observe that for two distinct indices $\ell_1,\ell_2\in[n]$, we have $\calS_{\ell_1}\neq \calS_{\ell_2}$.

Let $\calD$ be a uniform distribution over the set of matrices in $\calI_1$. Observe that for any matrix $A\in \calI_1$, we have $\max\{k_1,k_2\}=k$ and $\min\{k_1,k_2\}=1$.

Let us assume that the randomized algorithm $\calA$ is a distribution over a set $\calT$ of deterministic algorithms.
Fix a deterministic instance $\alg\in \calT$. Let $\tau(\alg,A)$ denote the number of queries taken by $\alg$ from an input matrix $A$. We now aim to show that $\mathbb{E}_{A\sim \calD}[\tau(\alg,A)]\geq \frac{(n-1)(k+1)}{2}$. For the sake of simplicity in presentation, let us assume that $\alg$ does not query an entry more than once.

Consider an input matrix $A\in\calI_1$. As $\alg$ correctly computes the set of all PSNEs of the input matrix, $\alg$ has to query all those entries of $A$ that have a value of $0$. For the sake of contradiction, let us assume that $\alg$ does not query an entry $(i',j')$ that has a value of $0$. One can create a matrix $B\in \calI_{i'}$ such that for all $(i,j)\neq (i',j')$ we have $B_{i,j}=A_{i,j}$ and $B_{i',j'}=1$. As $\alg$ terminates without querying the entry $(i',j')$ it cannot distinguish between $A$ and $B$ and hence outputs the wrong set of PSNEs for at least one of the matrices which is contradiction to our assumption. Hence, all the $n-1$ zero-valued entries are always queried for all matrices in $\calI_1$.

We continue to focus on the input matrices $A$ in $\calI_1$. Fix $i\in [n]\setminus\{1\}$. Let $X_{i,A}$ denote the number of entries of the set $\{(i,j):j\in[k]\}$ that $\alg$ queries before querying the zero-valued entry in row $i$ of an input matrix $A$. Let $\calF_i$ be the family of functions $f:[n]\setminus\{1,i\}\rightarrow [k]$. Fix $f\in \calF_i$. Let $\calI_{1,f}$ be the largest subset of $\calI_1$ such that for all $A\in\calI_{1,f}$ we have $A_{\hat i,f(\hat i)}=0$ for all $\hat i\in [n]\setminus\{1,i\}$. Let $(i,j_1)$ be the first entry to be queried from the set $\{(i,j):j\in[k]\}$  
 by $\alg$ if input matrix $A$ is chosen from the set $\calI_{1,f}$. Recursively for $\ell\geq 2$, let $(i,j_\ell)$ be the $\ell$-th entry to be queried from the set $\{(i,j):j\in[k]\}$  
 by $\alg$ if input matrix $A$ is chosen from the set $\calI_{1,f}$ and the first $\ell-1$ entries queried from the set $\{(i,j):j\in[k]\}$ have a value of one. Note that the sequence $j_1,\ldots,j_k$ is fixed as $\alg$ behaves identically on all the matrices in $\calI_{1,f}$ until it queries the entry in row $i$ that has a value of $0$.
Fix $m\in \{0,\ldots,k-1\}$. Now observe that $\mathbb{P}_{A\sim \calD}[X_{i,A}=m| A\in\calI_{1,f}]=\mathbb{P}_{A\sim \calD}[A_{i,j_{m+1}}=0| A\in\calI_{1,f}]=\frac{1}{k}$ as $|\calI_{1,f}|=k$ and there is exactly one matrix $A\in \calI_{1,f}$ such that $A_{i,j_{m+1}}=0$. Now we have the following:
\begin{align*}
    \mathbb{P}_{A\sim \calD}[X_{i,A}=m]&=\sum_{f\in\calF_i}\mathbb{P}_{A\sim \calD}[X_{i,A}=m| A\in\calI_{1,f}]\cdot \mathbb{P}_{A\sim \calD}[A\in\calI_{1,f}]\\
    &=\frac{1}{k}\cdot\sum_{f\in\calF_i}\mathbb{P}_{A\sim \calD}[A\in\calI_{1,f}]\\
    &=\frac{1}{k} \tag{as $\{\calI_{1,f}\}_{f\in\calF_i}$ is a partition of $\calI_1$}
\end{align*}

 Hence, $\mathbb{E}_{A\sim \calD}[X_{i,A}]=\sum_{m=0}^{k-1}\frac{m}{k}=\frac{k-1}{2}$. Now observe that $\mathbb{E}_{A\sim \calD}[\tau(\alg,A)]\geq (n-1)+\sum_{i=2}^{n}\mathbb{E}_{A\sim \calD}[X_{i,A}]=\frac{(n-1)(k+1)}{2}$. Now due to Yao's lemma, we have that
\[
\max_{A\in\calI_1}\E[\tau(A)] \geq \min_{\alg\in\calT}\E_{A\sim\calD}[\tau(\alg,A)] \geq\frac{(n-1)(k+1)}{2}.
\] 
\end{proof}
\section{{Conclusion and Open Questions}}
This paper characterizes the query complexity of finding the set of all Nash equilibria, $\calX_\star\times\calY_\star$, in two-player zero-sum matrix games. In Theorem \ref{thm:algo-result}, we demonstrated that with probability at least $1-\delta$, the set $\calX_\star\times\calY_\star$ can be found by querying at most $O(nk^5\cdot\polylog(\frac{n}{\delta}))$ entries. Additionally, in Theorem \ref{thm:lower-multiple-ne}, we established a lower bound of $\Omega(nk)$. Bridging the gap between this upper and lower bound remains an important open question.

Furthermore, note that our algorithm does not run in polynomial time. Therefore, designing a polynomial-time randomized algorithm that requires $o(n^2)$ queries and finds a Nash equilibrium presents another compelling open problem. We conjecture that such an algorithm would need to be strongly polynomial.

Finally, it is worth recalling that there exists a deterministic algorithm with a query complexity of $O(n)$ to find the strict PSNE. The question of whether a deterministic algorithm can be developed to find the set of all Nash equilibria with $o(n^2)$ queries is yet another important open question.

\section*{ACKNOWLEDGEMENTS}
This work was supported in part by NSF TRIPODS CCF Award \#2023166 and a Northrop Grumman University Research Award. We thank Prof. Rahul Savani for his feedback on the paper.
\bibliographystyle{plainnat}
\bibliography{refs.bib}

\begin{thebibliography}{22}
\providecommand{\natexlab}[1]{#1}
\providecommand{\url}[1]{\texttt{#1}}
\expandafter\ifx\csname urlstyle\endcsname\relax
  \providecommand{\doi}[1]{doi: #1}\else
  \providecommand{\doi}{doi: \begingroup \urlstyle{rm}\Url}\fi

\bibitem[Adler(2013)]{adler2013equivalence}
Ilan Adler.
\newblock The equivalence of linear programs and zero-sum games.
\newblock \emph{International Journal of Game Theory}, 42\penalty0
  (1):\penalty0 165, 2013.

\bibitem[Auger et~al.(2014)Auger, Liu, Ruette, St-Pierre, and
  Teytaud]{auger2014sparse}
David Auger, Jialin Liu, Sylvie Ruette, David~Lupien St-Pierre, and Olivier
  Teytaud.
\newblock Sparse binary zero-sum games.
\newblock In \emph{ACML}, 2014.

\bibitem[Babichenko(2016)]{babichenko2016query}
Yakov Babichenko.
\newblock Query complexity of approximate nash equilibria.
\newblock \emph{Journal of the ACM (JACM)}, 63\penalty0 (4):\penalty0 1--24,
  2016.

\bibitem[Babichenko(2020)]{babichenko2020informational}
Yakov Babichenko.
\newblock Informational bounds on equilibria (a survey).
\newblock \emph{ACM SIGecom Exchanges}, 17\penalty0 (2):\penalty0 25--45, 2020.

\bibitem[Bienstock et~al.(1991)Bienstock, Chung, Fredman, Sch{\"a}ffer, Shor,
  and Suri]{bienstock1991note}
Daniel Bienstock, Fan Chung, Michael~L Fredman, Alejandro~A Sch{\"a}ffer,
  Peter~W Shor, and Subhash Suri.
\newblock A note on finding a strict saddlepoint.
\newblock \emph{The American mathematical monthly}, 98\penalty0 (5):\penalty0
  418--419, 1991.

\bibitem[Bohnenblust et~al.(1948)Bohnenblust, Girshick, Snow, Dresher,
  Helmer-Hirschberg, McKinsey, Shapley, and
  Harris]{bohnenblust1948mathematical}
HF~Bohnenblust, MA~Girshick, RN~Snow, Melvin Dresher, Olaf Helmer-Hirschberg,
  JCC McKinsey, Lloyd~S Shapley, and Theodore~Edward Harris.
\newblock Mathematical theory of zero-sum two-person games with a finite number
  or a continuum of strategies.
\newblock 1948.

\bibitem[Bohnenblust et~al.(1950)Bohnenblust, Karlin, and
  Shapley]{bohnenblust1950solutions}
HF~Bohnenblust, S~Karlin, and LS~Shapley.
\newblock Solutions of discrete, two-person games.
\newblock \emph{Contributions to the Theory of Games}, 1:\penalty0 51--72,
  1950.

\bibitem[Brooks and Reny(2021)]{brooks2021canonical}
Benjamin Brooks and Philip~J Reny.
\newblock A canonical game--nearly 75 years in the making--showing the
  equivalence of matrix games and linear programming.
\newblock \emph{Available at SSRN 3851583}, 2021.

\bibitem[Chen et~al.(2009)Chen, Deng, and Teng]{chen2009settling}
Xi~Chen, Xiaotie Deng, and Shang-Hua Teng.
\newblock Settling the complexity of computing two-player nash equilibria.
\newblock \emph{Journal of the ACM (JACM)}, 56\penalty0 (3):\penalty0 1--57,
  2009.

\bibitem[Dallant et~al.(2024{\natexlab{a}})Dallant, Haagensen, Jacob, Kozma,
  and Wild]{dallant2024finding}
Justin Dallant, Frederik Haagensen, Riko Jacob, L{\'a}szl{\'o} Kozma, and
  Sebastian Wild.
\newblock Finding the saddlepoint faster than sorting.
\newblock In \emph{2024 Symposium on Simplicity in Algorithms (SOSA)}, pages
  168--178. SIAM, 2024{\natexlab{a}}.

\bibitem[Dallant et~al.(2024{\natexlab{b}})Dallant, Haagensen, Jacob, Kozma,
  and Wild]{dallant2024optimal}
Justin Dallant, Frederik Haagensen, Riko Jacob, L{\'a}szl{\'o} Kozma, and
  Sebastian Wild.
\newblock An optimal randomized algorithm for finding the saddlepoint.
\newblock \emph{arXiv preprint arXiv:2401.06512}, 2024{\natexlab{b}}.

\bibitem[Dantzig(1951)]{dantzig1951proof}
George~B Dantzig.
\newblock A proof of the equivalence of the programming problem and the game
  problem.
\newblock \emph{Activity analysis of production and allocation}, 13, 1951.

\bibitem[Daskalakis et~al.(2007)Daskalakis, Mehta, and
  Papadimitriou]{daskalakis2007progress}
Constantinos Daskalakis, Aranyak Mehta, and Christos Papadimitriou.
\newblock Progress in approximate nash equilibria.
\newblock In \emph{Proceedings of the 8th ACM Conference on Electronic
  Commerce}, pages 355--358, 2007.

\bibitem[Daskalakis et~al.(2009)Daskalakis, Mehta, and
  Papadimitriou]{daskalakis2009note}
Constantinos Daskalakis, Aranyak Mehta, and Christos Papadimitriou.
\newblock A note on approximate nash equilibria.
\newblock \emph{Theoretical Computer Science}, 410\penalty0 (17):\penalty0
  1581--1588, 2009.

\bibitem[Deligkas et~al.(2022)Deligkas, Fasoulakis, and
  Markakis]{deligkas2022polynomial}
Argyrios Deligkas, Michail Fasoulakis, and Evangelos Markakis.
\newblock A polynomial-time algorithm for 1/3-approximate nash equilibria in
  bimatrix games.
\newblock \emph{arXiv preprint arXiv:2204.11525}, 2022.

\bibitem[Deligkas et~al.(2023)Deligkas, Fasoulakis, and
  Markakis]{deligkas2023polynomial}
Argyrios Deligkas, Michail Fasoulakis, and Evangelos Markakis.
\newblock A polynomial-time algorithm for 1/2-well-supported nash equilibria in
  bimatrix games.
\newblock In \emph{Proceedings of the 2023 Annual ACM-SIAM Symposium on
  Discrete Algorithms (SODA)}, pages 3777--3787. SIAM, 2023.

\bibitem[Fearnley and Savani(2016)]{fearnley2016finding}
John Fearnley and Rahul Savani.
\newblock Finding approximate nash equilibria of bimatrix games via payoff
  queries.
\newblock \emph{ACM Transactions on Economics and Computation (TEAC)},
  4\penalty0 (4):\penalty0 1--19, 2016.

\bibitem[Fearnley et~al.(2013)Fearnley, Gairing, Goldberg, and
  Savani]{fearnley2013learning}
John Fearnley, Martin Gairing, Paul Goldberg, and Rahul Savani.
\newblock Learning equilibria of games via payoff queries.
\newblock In \emph{Proceedings of the fourteenth ACM conference on Electronic
  commerce}, pages 397--414, 2013.

\bibitem[Goldberg and Roth(2016)]{goldberg2016bounds}
Paul~W Goldberg and Aaron Roth.
\newblock Bounds for the query complexity of approximate equilibria.
\newblock \emph{ACM Transactions on Economics and Computation (TEAC)},
  4\penalty0 (4):\penalty0 1--25, 2016.

\bibitem[Karlin and Peres(2017)]{karlin2017game}
Anna~R Karlin and Yuval Peres.
\newblock \emph{Game theory, alive}, volume 101.
\newblock American Mathematical Soc., 2017.

\bibitem[Maiti et~al.(2024)Maiti, Boczar, Jamieson, and Ratliff]{maiti2024near}
Arnab Maiti, Ross Boczar, Kevin Jamieson, and Lillian Ratliff.
\newblock Near-optimal pure exploration in matrix games: A generalization of
  stochastic bandits \& dueling bandits.
\newblock In \emph{International Conference on Artificial Intelligence and
  Statistics}, pages 2602--2610. PMLR, 2024.

\bibitem[v.~Neumann(1928)]{v1928theorie}
J~v.~Neumann.
\newblock Zur theorie der gesellschaftsspiele.
\newblock \emph{Mathematische annalen}, 100\penalty0 (1):\penalty0 295--320,
  1928.

\end{thebibliography}
\clearpage
 
%\appendix
%\input{appendix.tex}

%\clearpage

\end{document}